\documentclass[11pt,letterpaper]{article}

\usepackage{fullpage}

\usepackage{graphicx} 

    \usepackage{algorithm}
    \usepackage{algorithmic}

    \usepackage[square,numbers]{natbib}

    \usepackage{hyperref}
    \usepackage{graphicx}
    \usepackage{svg}
    \usepackage{float}
    \usepackage{verbatim}

    \usepackage[utf8]{inputenc} 
\usepackage[T1]{fontenc}    
\usepackage{hyperref}       
\usepackage{url}            
\usepackage{booktabs}       
\usepackage{amsfonts}       
\usepackage{nicefrac}       
\usepackage{microtype}      
\usepackage{xcolor}         
\usepackage[normalem]{ulem} 
\usepackage{amsmath,amssymb}
\usepackage{amsthm}
\usepackage[capitalize]{cleveref}

    \makeatletter
\@ifundefined{originalleft}{\let\originalleft\left}{}
\@ifundefined{originalright}{\let\originalright\right}{}
\makeatother
   \newtheorem{theorem}{Theorem}[section]
\newtheorem{lemma}[theorem]{Lemma}
\newtheorem{fact}[theorem]{Fact}
\newtheorem{claim}[theorem]{Claim}
\theoremstyle{definition}
\newtheorem{definition}[theorem]{Definition}

    \renewcommand{\left}{\mathopen{}\mathclose\bgroup\originalleft}
    \renewcommand{\right}{\aftergroup\egroup\originalright}
    \renewcommand{\epsilon}{\varepsilon}
    \newcommand{\eps}{\varepsilon}
    \renewcommand{\backslash}{\setminus}

\newcommand{\dens}{\rho}

\newcommand{\Comment}[1]{\textcolor{black!80}{\texttt{// #1}}}

\newcommand{\PO}{\textsf{PO}}

\newcommand{\EFx}{\textsf{EFx}}

\newcommand{\ComputeFEFx}{\textsc{ComputeFEFx}}
\newcommand{\ComputeFEFeps}{\textsc{ComputeFEF-eps}}

\newcommand{\FEF}{\ensuremath{\mathsf{FEF}}}
\newcommand{\FEFx}{\ensuremath{\mathsf{FEF}_x}}
\newcommand{\charity}{\ensuremath{\mathrm{charity}}} 

\title{\bfseries Existence and Computation of Fair Allocations\\under Constraints}

\author{Siddharth Barman\thanks{Department of Computer Science and Automation, Indian Institute of Science, Bangalore 560012, India. {\tt barman@iisc.ac.in}} \and Ioannis Caragiannis\thanks{Department of Computer Science, Aarhus University, {\AA}bogade 34, 8200 Aarhus N, Denmark. {\tt iannis@cs.au.dk}} \and Sudarshan Shyam\thanks{Department of Computer Science, Aarhus University, {\AA}bogade 34, 8200 Aarhus N, Denmark. {\tt  shyam@cs.au.dk}} }
\date{\empty}

\begin{document}



\maketitle

\begin{abstract}
We study fair division of divisible goods under generalized assignment constraints. Here, each good has an agent-specific value and size, and every agent has a budget constraint that limits the total size of the goods she can receive. Since it may not always be feasible to assign all goods to the agents while respecting the budget constraints, we use the construct of charity to accommodate the unassigned goods. In this constrained setting with charity, we obtain several new existential and computational results for feasible envy-freeness ($\FEF$); this fairness notion requires that agents are envy-free, considering only budget-feasible subsets.  

First, we simplify and extend known existential results for $\FEF$ allocations. Then, we show that the space of $\FEF$ allocations has a non-convex structure. Next, using a fixed-point argument, we establish a novel guarantee that $\FEF$ can always be achieved with Pareto-optimality. Furthermore, we give an alternative proof of the fact that one cannot additionally obtain truthfulness in this context: There does not exist a mechanism that is simultaneously truthful, fair, and Pareto-optimal. On the positive side, we show that truthfulness is compatible with each of $\FEF$ and Pareto-optimality, individually.
\end{abstract}

\section{Introduction}
Dividing items fairly among agents with individual preferences is a central requirement in many domains. Problems ranging from distributing land among communities to allocating computing resources among organizational groups can all be addressed using fair division frameworks. Motivated by such considerations, over the past seventy years, research efforts in the economics, computer science, and mathematics literature have resulted in significant developments in fair division \cite{moulin2004fair, handbook2016, amanatidis2022fair}. 


Classic work focused on the fair division of a single divisible and heterogeneous resource, referred to as a cake~\cite{P16}. Starting with this cake cutting setup, the fairness notion that has been predominantly studied in the literature is {envy-freeness}~\cite{F66,V74}. An allocation is said to be envy-free if no agent values the bundle of any other agent higher than her own. Another desired property when dividing resources is economic efficiency. A standard way of formalizing this desideratum is Pareto-optimality. Specifically, an allocation is said to be Pareto-optimal if we cannot increase the value of any agent, via  reallocations, without reducing that of someone else.


Notably, in the divisible-goods setting, fairness and efficiency---formalized through envy-freeness and Pareto-optimality, respectively---can be achieved together \cite{V74}. Specifically, it is known that an allocation that maximizes the Nash social welfare (the geometric mean of agents' valuations) is both envy-free and Pareto-optimal. Furthermore, one can maximize the Nash social welfare by solving the well-known convex program of Eisenberg and Gale~\cite{eisenberg1959consensus}. Hence, in the current context of divisible goods, fair and efficient allocations can, in fact, be computed in polynomial time. The interplay of fairness and efficiency is a central theme in fair division literature and has been studied in various other contexts, including in the setting of indivisible goods \cite{caragiannis2019unreasonable}. 


However, most existing fair division results (with or without efficiency considerations) assume that the agents have no restrictions on the goods they can receive. This assumption does not always capture real-world scenarios. For instance, in fair land division, a natural requirement is to ensure that each agent receives a connected plot~\cite{SNHA20}. Motivated by such considerations, recent works have considered fair division under allocation constraints \cite{suksompong2021constraints}. 


The current work contributes to this active thread of research and establishes fairness and efficiency guarantees under generalized assignment constraints. These constraints capture the allocation requirements inherent in numerous real-world domains; see multiple applications surveyed in \cite{oncan2007survey}. Here, each divisible good has an agent-specific size and value, and each agent has a budget which constrains the size of her bundle. Also, note that, in such a constraint setting, it may not always be feasible to assign all goods to the agents while respecting the budget constraints. Hence, as in recent works (see, e.g., \cite{BKSS23}), we conform to the construct of charity to denote the set of unassigned goods.

In this constrained setting with charity, our work establishes fairness guarantees considering a notion called feasible envy-freeness (FEF)~\cite{dror,wu2021budget,BKSS23}. This notion extends envy-freeness and is formulated with the idea that it is unreasonable to expect two agents, with significantly different budgets, to receive bundles of comparable value. Hence, FEF requires that agents are envy-free, considering only budget-feasible subsets. Specifically, an allocation is said to be feasibly envy-free (FEF) if and only if each agent $i$ values her bundle over every budget-feasible subset\footnote{For ease of exposition, even for divisible goods, we use the term subset to denote
fractional assignments of the goods.} within any other agent's bundle; an analogous guarantee is required against the charity.

\subsection{Our Results and Contributions}
This section details our fairness, efficiency, and truthfulness results for allocating divisible goods among agents with additive valuations and under generalized assignment constraints. Our contribution is summarized pictorially in the triangle below.

\begin{center}
\includegraphics[scale=0.6]{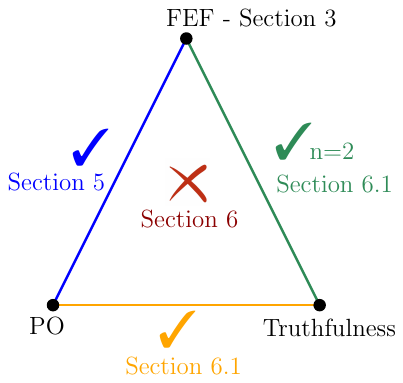}    
\end{center}

\noindent
{\bf Fairness:} Prior work \cite{BKSS23} has shown that, under generalized assignment constraints, $\FEF$ allocations of divisible goods always exist and can be found efficiently. We develop an alternate proof of the existence for $\FEF$ allocations. Our result, in fact, generalizes the previously known existential guarantee by encapsulating a more general class of constraints and valuations (Theorem \ref{theorem:generalconstraints} in Section \ref{fefgeneral}). Our proof here is non-constructive and, hence, does not directly lead to an algorithm. We also present a fair division instance with simple constraints, in which the space of $\FEF$ allocations is not convex (Theorem \ref{thm:non-convex} in Section \ref{sec:non-convex}). This comes in contrast to the unconstrained setting where envy-free allocations can be easily characterized by a set of linear inequalities. This places in context the earlier algorithm of \citet{BKSS23}, which employs a series of parameterized linear programs in the computation of an $\FEF$ allocation.
\medskip

\noindent
{\bf Fairness and Efficiency:} We establish (in Section~\ref{sec:fefpo}) that, under generalized assignment constraints, there always exists an $\FEF$ allocation that is also Pareto-optimal (among all feasible allocations). This fairness plus efficiency guarantee resolves, in the positive, an open question posed in  \cite{BKSS23}. Furthermore, our proof implies the {\rm PPAD}-membership of computing $\FEF$ and Pareto-optimal (PO) allocations. It is relevant to note that a special case of this computational problem is already known to be {\rm PPAD}-hard~\cite{vaziranippad}. 
\medskip

\noindent
{\bf Fairness, Efficiency, and Truthfulness:} Finally, we show (in Section~\ref{sec:counterexample}) that one cannot additionally achieve truthfulness in this context: There does not exist a truthful mechanism for finding $\FEF$ and $\PO$ allocations. Interestingly, our impossibility result is not confined to the constrained setting, i.e., it also holds in the standard, unconstrained setup. This instantiation to the unconstrained context is interesting in and of itself,\footnote{In particular, this negative result implies that one cannot extend the classic work of Varian \cite{V74} to include truthfulness.} and we would not be surprised if it is a folklore result; still, to the best of our knowledge, there is no reference for it in the literature.




\subsection{Additional Related Work}
Complementary to the current work, which primarily addresses divisible goods, recent works in fair division have also studied constraints in the context of indivisible goods~\cite{suksompong2021constraints}. In particular, fairness guarantees under budget constraints have been obtained for Nash social welfare \cite{budgetfeasiblenash} and maximin shares \cite{hummel2022maximin}. Also, Barman et al.~\cite{BKSS23} show that feasible envy-freeness up to any good ($\FEFx$) is achievable for indivisible goods and under generalized assignment constraints. 

\citet{gp22,Momi2017} also consider the compatibility of fairness, efficiency, and incentives: in particular, they show that imposing Pareto-optimality together with truthfulness essentially forces dictatorship, and hence rules out achieving envy-freeness as well. Our counterexample is complementary, showing that Pareto-optimality + envy-freeness already precludes (even approximate) truthfulness, yielding an “inapproximability of truthfulness” guarantee via instances where misreporting improves an agent’s value by a constant factor.

We focus on goods (i.e., items with nonnegative values). Fairness guarantees for chores (negatively valued items) under budget constraints are addressed in~\cite{choresbudgets}.




\section{Notation and Preliminaries}
\noindent
We study the problem of fair division of $m$ divisible goods among $n$ agents with additive valuations and under constraints. We use $[n]=\{1, \ldots, n\}$ and $[m]=\{1,\ldots, m\}$ to denote the set of agents and the goods, respectively. An allocation corresponds to a collection of $n$ vectors $x_1, x_2, \ldots, x_n \in [0,1]^m$. Here, $x_i = (x_{i,1}, x_{i,2}, \dots, x_{i,m})$ is the bundle assigned to agent $i \in [n]$, with $x_{i,g} \in [0,1]$ denoting the fraction of good $g \in [m]$ assigned to agent $i$. For an allocation $(x_1, \ldots, x_n)$ it must hold that $\sum_{i=1}^n x_{i,g} \leq 1$, for each good $g \in [m]$. 

We will, throughout, write $v_i: [0,1]^m \rightarrow \mathbb{R}_+$ to denote the valuations of the agent $i \in [n]$.  Unless stated otherwise, the valuations are assumed to be additive. That is, for each bundle $x_i \in [0,1]^m$, agent $i$'s valuation is defined as $v_i(x_i) = \sum_{g \in [m]} x_{i,g} \  v_{i,g}$; here, $v_{i,g} \in \mathbb{R}_+$ is the value that agent $i$ has for the entire good $g$.

In a general form of the constrained fair division problem, we have a constraint set $\mathcal{F}_i \subseteq [0,1]^m$, for each agent $i \in [n]$, and an allocation $(x_1, \ldots, x_n)$ is said to be \emph{feasible} if $x_i \in \mathcal{F}_i$, for all $i \in [n]$. Note that, in the constrained fair division setting, it may not always be possible to allocate all the goods to the agents; consider, for example, the case wherein $\mathcal{F}_i = \emptyset$ for each $i \in [n]$. To address this, we introduce the construct of \emph{charity} that receives the remaining (unassigned) fractions of the goods. Formally, for any allocation $x$ and each good $g$, we write $x_{\charity,g} = 1 - \sum_{i=1}^n x_{i,g}$. Hence, $x_{\charity}\in [0,1]^m$ denotes the unassigned bundle. Also, for bundles $y, z \in [0,1]^m$, we will write $y \leq z$ to denote that the inequality holds component-wise, i.e., $y_g \leq z_g$ for each good $g \in [m]$. 

In this work, we particularly focus on generalized assignment constraints. 
Here, each good $g$ has an agent-specific size $s_i(g) \in \mathbb{R}_+ $ and each agent is constrained to receive goods of cumulative size at most her budget $B_i \in \mathbb{R}_+$, i.e., a bundle $x_i \in [0,1]^m$ is feasible for $i$ if and only if $\sum_{g=1}^m \  s_i(g) \  x_{i,g} \leq B_i$.  We will also consider the densities of the goods with respect to their values and sizes for the agents; formally, the density of a good $g$ for an agent $i \in [n]$ is defined as $\dens_i(g) := v_i(g)/s_i(g)$.

An allocation $y=(y_1,\ldots,y_n)$ is said to be Pareto-dominated by another allocation $z=(z_1,\ldots, z_n)$ if $v_i(y_i) \leq v_i(z_i)$ for all agents $i \in [n]$, and at least one of these inequalities is strict. In the context of constraints, we say that a feasible allocation $x=(x_1, \ldots, x_n)$ is Pareto-optimal if no other {\it feasible} allocation Pareto-dominates it. 
\medskip

\noindent
\textbf{Fairness Notions.} An allocation is said to be envy-free if no agent values the bundle of any other agent more than her own. This definition of fairness is not directly applicable in settings with constraints. Consider, e.g., the case wherein we have two agents with feasible sets A and B, respectively. The first agent cannot receive any good, while the second agent is entitled to some. Here, if one strictly clings to envy-freeness, then the only fair and feasible allocation is one in which both agents get an empty bundle, since, otherwise, the first agent would envy the second one. It is, however, unreasonable to completely deprive the second agent. Motivated by such considerations and conforming to prior works in constrained fair division, we study a more applicable fairness criterion called \emph{feasible envy-freeness}. Formally, 
\begin{definition}[Feasible envy-freeness ($\FEF$)]
\label{def:fef} 
    In an allocation $x=(x_1, \ldots, x_n)$, an agent $i \in [n]$ is said to be envy-free towards agent $j \in [n]$ if for each bundle $y \leq x_j$ with $y \in \mathcal{F}_i$, we have $v_i(x_i) \geq v_i(y)$. Further, agent $i \in [n]$ is said to be envy-free towards the \emph{charity} if each bundle $y \leq x_{\charity}$ with $y \in \mathcal{F}_i$, we have $v_i(x_i) \geq v_i(y)$. A feasible allocation is said to be $\FEF$ if each agent is envy-free towards every other agent and the charity.
\end{definition}

Our proof of existence for $\FEF$ (with divisible goods) builds on an algorithm from~\cite{BKSS23} that finds fair allocations of indivisible goods. This discrete fair division result considers, for indivisible goods, an analogous fairness notion called feasible envy-freeness up to any good ($\FEFx$). Note that in the indivisible goods setting, an allocation $(A_1,A_2, \ldots, A_n)$ is a collection of pairwise-disjoint subsets of the (indivisible) goods, where $A_i \subseteq [m]$ is the bundle assigned to agent $i$ and  $A_i \cap A_j = \emptyset$ for all $i \neq j$. The $\FEFx$ criterion is defined next.\footnote{Note that this criterion extends the well-studied $\EFx$ criterion from the unconstrained setting to the current setup, which has constraints and charity.} 



 \begin{definition}[Feasible envy-freeness up to any good ($\FEFx$)]
     In an allocation $(A_1, \ldots, A_n)$ of indivisible goods, an agent $i \in [n]$ is said to be $\FEFx$ towards agent $j$ if for every strict subset $S \subsetneq A_j$ with $S \in \mathcal{F}_i$, we have $v_i(A_i) \geq v_i(S)$. Further, agent $i$ is said to be $\FEFx$ towards the charity $A_{\charity} := [m] \setminus \left( \bigcup_{i=1}^n A_i \right)$ if for every strict subset $S \subsetneq A _{\charity}$ with $S \in \mathcal{F}_i$, we have $v_i(A_i) \geq v_i(S)$.   A feasible allocation $(A_1, \ldots, A_n)$ is said to be $\FEFx$ if each agent is $\FEFx$ towards every other agent and the charity.
  \end{definition}
  

\noindent
\textbf{Truthful Mechanisms.} For an impossibility result (in Section \ref{sec:counterexample}), we consider the following mechanism design setup. The agents (strategically) report their additive valuations for all the goods. In particular, the reports for the agents $i \in [n]$ are the values $v_i = \{v_{i,g}\}_{g \in [m]}$. The constraints are publicly known; this, in particular, avoids infeasible allocations. A mechanism $\mathcal{M}: (v_1, v_2, \dots, v_n) \mapsto [0,1]^{n \times m}$, maps reported valuation profiles to feasible allocations. 

Mechanism $\mathcal{M}$ is said to be \emph{truthful} if for  each agent $i \in [n]$ with true valuation $v_i$, and any valuation profile $v'_1, v'_2, \dots, v'_n$ we have $v_i(x_i) \geq v_i(x_i')$, where $x = \mathcal{M}(v'_1, v'_2, \ldots,v'_{i-1}, v_i, v'_{i+1}, \ldots v'_n)$ and $x' = \mathcal{M}(v'_1, v'_2, \ldots,v'_{i-1}, v_i', v'_{i+1},\dots, v'_n)$. That is,  a mechanism $\mathcal{M}$ is truthful if no agent can gain by misreporting her valuation. 

\section{Existence of FEF Allocations} 
\label{fefgeneral}

In this section, we show that there always exists an $\FEF$ allocation for a weaker class of constraints called \emph{closed} constraints \footnote{Recall that a set is closed if every convergent sequence of points in the set has a point in the set as its limit.} (where for each agent, the feasible subset is closed) and Lipschitz-continuous valuations. 

\begin{algorithm}
\caption{\ComputeFEFx{}~\cite{BKSS23} }
\begin{flushleft}
\hangindent=2em    
\hangafter=1 \textbf{Input:} Fair division instance $\langle [n], [m], \{v_i(S) \}_{i, S \subseteq [m]}, \{\mathcal{F}_i\}_i \rangle$ with indivisible goods and constraints.  

\textbf{Output:} An $\FEFx$ allocation. 
\end{flushleft}
\label{algorithm:fefx}
\begin{algorithmic}[1]
\STATE Initialize allocation $\mathcal{A}=(A_1,\ldots,A_n)=(\emptyset, \ldots,\emptyset)$ and charity $C = [m]$. 
\WHILE{the charity $C$ is envied by some agent} 
	\STATE Select a minimal envied set $T\subseteq C$ and let $k$ be the agent that envies $T$. 
	\STATE Update bundle $A_{k}\leftarrow T$ and charity $C\leftarrow [m]\setminus \left( \cup_{i=1}^n A_i \right)$.
\ENDWHILE\label{line:outer-loop-exit}
\RETURN Allocation $\mathcal{A}$.
\end{algorithmic}
\end{algorithm}

We use an algorithm from~\cite{BKSS23} to 
construct a sequence of allocations with progressively decreasing envy and argue that the limit of the sequence will be $\FEF$.
 For the sake of completeness, we reproduce the algorithm for the indivisible goods case here. It relies on the idea of iteratively swapping \emph{minimally envied subsets} from the charity (if such sets exist). A set of (indivisible) goods $T$ is called an envied set if there exists a subset $S \subseteq T$  which is feasible for some agent $k \in [n]$ who also envies it.
  \begin{definition}
       A set of goods $T$ is called a minimally envied set if $T$ is envied by some agent $k \in [n]$, while no strict subset $T' \subseteq T$ is envied by any agent $k'\in [n]$.
  \end{definition}

The proof in \cite{BKSS23} shows that for any fair division instance with indivisible goods and arbitrary constraints, Algorithm \ref{algorithm:fefx} always terminates and returns an \FEFx{} allocation.

To adapt this algorithm to the divisible goods case, we \emph{cut} each divisible good into many identical pieces. We assume that the valuation functions of all agents are $L$-Lipschitz continuous (for some finite $L$), i.e., for any allocation and any agent, the \emph{additional} value gained by adding a $\delta$-fraction of any good $g$ is at most $\delta L$. Each good $g$ is cut into $L/\varepsilon$ pieces. This ensures that each piece has value at most $\varepsilon$ for each agent. The modified algorithm is given as Algorithm \ref{algorithm:fefeps} below. We first define an approximate feasible envy-freeness property $\FEF$-$\eps$ which Algorithm \ref{algorithm:fefeps} guarantees.

\begin{definition} [Approximate feasible envy-freeness $\FEF$-$\varepsilon$]

For a fixed $\eps > 0$, an allocation $(x_1,x_2, \dots, x_n)$ is said to be $\FEF$-$\eps$ if $ \max_{i,j} \max_{S \subseteq x_j, S \in \mathcal{F}_i} (0, v_i(S) -  v_i(x_i))  \le \eps$, i.e., the envy between any two agents is upper-bounded by $\eps$.
\end{definition}

\begin{algorithm}
\caption{\ComputeFEFeps{}}
\begin{flushleft}
\hangindent=2em    
\hangafter=1
\textbf{Input:} Fair division instance $\langle [n], [m], \{v_i(S) \}_{i, S \subseteq [0,1]^m}, \{\mathcal{F}_i\}_i \rangle$ with divisible goods,  closed constraints and $L$-Lipschitz continuous valuation functions.  

\textbf{Output:} An $\FEF$-$\eps$ allocation. 
\end{flushleft}
\label{algorithm:fefeps}
\begin{algorithmic}[1]
\STATE Split each good $g \in [m]$ into $ L/\varepsilon$ identical pieces and denote by $G$ the set of all pieces.
\STATE Initialize allocation $\mathcal{A}=(A_1,\ldots,A_n)=(\emptyset, \ldots,\emptyset)$ and charity $C = G$. 
\WHILE{the charity $C$ is envied by some agent}
	\STATE Select a minimal envied set $T\subseteq C$ and let $k$ be the agent that envies $T$. 
	\STATE 
    Update bundle $A_{k}\leftarrow T$ and charity $C\leftarrow G\setminus \left( \cup_{i=1}^n A_i \right)$.
\ENDWHILE
\RETURN Allocation $\mathcal{A}$.
\end{algorithmic}
\end{algorithm}

\begin{lemma}\label{lemma:fefeps}
    For any fair-division instance, Algorithm \ref{algorithm:fefeps} outputs a $\FEF$-$\eps$ allocation.
\end{lemma}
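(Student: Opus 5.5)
The plan is to reduce to the indivisible-goods result of \cite{BKSS23}, which is assumed: for any indivisible instance with arbitrary constraints, Algorithm~\ref{algorithm:fefx} terminates and returns an $\FEFx$ allocation. Algorithm~\ref{algorithm:fefeps} is exactly Algorithm~\ref{algorithm:fefx} run on an auxiliary indivisible instance. Its goods are the set $G$ of pieces. A subset $S\subseteq G$ is feasible for agent $i$ when the corresponding fractional bundle (each piece of good $g$ contributing $\eps/L$ of $g$) lies in $\mathcal{F}_i$. The value of $S$ is $v_i$ evaluated at that fractional bundle. Termination and the $\FEFx$ property of the output $\mathcal{A}$ then come for free.

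The main step is to turn the discrete guarantee into $\FEF$-$\eps$ for the divisible allocation $x$ induced by $\mathcal{A}$. Fix agents $i$, $j$ (or the charity) and a fractional bundle $y\le x_j$ with $y\in\mathcal{F}_i$; we need $v_i(y)\le v_i(x_i)+\eps$. The obstacle is that $y$ need not be a union of pieces, so $\FEFx$ does not apply to $y$ directly.

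I would handle this through the \emph{invariant} of the algorithm rather than the final $\FEFx$ statement alone. When a bundle $T$ was handed to agent $j$, it was a minimal envied set, so no strict subset of $T$ was envied by anyone, including $i$, measured against $i$'s bundle at that time. Since agents' values only increase over the run, each agent's envy towards a non-charity bundle $A_j$ is witnessed only through $A_j$ itself. In the divisible picture, we want to show $v_i(y)\le v_i(x_i)+\eps$ for fractional $y\le x_j$.

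First, suppose $y\ne x_j$, so that some piece $p\in A_j$ is not fully used by $y$. Scale down in one coordinate: $y$ is dominated by the bundle of $A_j\setminus\{p\}$ together with the part of $y$ inside $p$. I would like to say $y' := y$ restricted to $A_j\setminus\{p\}$ is feasible and has value at least $v_i(y)-\eps$ by Lipschitzness, since piece $p$ is worth at most $\eps$.

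This requires feasibility to be downward closed (true for the generalized assignment constraints, and I would assume it for closed constraints as the natural reading). However, $y'$ is still fractional within the other pieces. So the cleaner route is to strengthen the discretization to the fractional level: interpret the auxiliary instance as allowing the minimal-envied-set test over fractional sub-bundles of $T$, i.e.\ "envied" means some feasible $y\le T$ with $v_i(y)>v_i(A_i)$. The argument of \cite{BKSS23} goes through verbatim for this notion, because it only uses monotone improvement of the receiving agent and a potential argument over the finite set $G$.

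With that, the output satisfies the following. For any feasible $y\le x_j$ that avoids at least one whole piece of $A_j$, we get $v_i(y)\le v_i(x_i)$. For general $y$, removing one piece costs at most $\eps$ by $L$-Lipschitzness and the piece size $\eps/L$, giving $v_i(y)\le v_i(x_i)+\eps$. For the charity, termination directly gives that no feasible $y\le x_{\charity}$ is envied, so the envy there is $0$.

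I expect the main obstacle to be this interface between fractional sub-bundles and piece-level minimality, specifically checking that monotonicity and downward-closedness of $\mathcal{F}_i$ are available. If they are not, I would instead state the lemma with envy measured over piece-aligned subsets and defer the passage to arbitrary fractions to the limit argument that follows.
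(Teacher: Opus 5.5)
\textbf{Gap in the ``cleaner route.''} Suppose you change the envy test in Algorithm~\ref{algorithm:fefeps} to allow fractional witnesses $y\le T$, but still take minimality over sets of pieces. Then the argument of \cite{BKSS23} does not go through. The ``monotone improvement of the receiving agent'' you rely on is exactly what breaks. In \cite{BKSS23}, if $k$ envies a minimal envied set $T$ via some $S\subseteq T$, then $S$ is itself one of the sets over which minimality is taken. So minimality forces $S=T$, hence $T\in\mathcal{F}_k$ and $k$ strictly gains. A fractional witness is not such a set, so nothing forces $T\in\mathcal{F}_k$.

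For a concrete failure, take $s_k(g)=v_k(g)=1$ and $0<B_k<\eps/L$. At the start, a single piece $p$ of $g$ is envied by $k$ via $y_g=B_k$. It is minimally envied, since its only strict subset is $\emptyset$. Yet $p\notin\mathcal{F}_k$, so assigning it to $k$ violates her budget. The obvious repairs also fail:
\begin{itemize}
\item Giving $k$ only $y$ splits pieces and destroys the finiteness behind termination.
\item Taking minimality over fractional bundles does not work, because envy is a strict inequality. Shrinking a witness slightly keeps it envied, so minimal envied bundles do not exist in general.
\end{itemize}
Moreover, the lemma is about the algorithm as stated. So your bounds of zero envy for $y$ avoiding a whole piece, and zero envy towards the charity, are not established.

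\textbf{What is missing.} You are right that $\FEFx$ on pieces does not by itself control fractional $y$. The paper's one-line proof is just your opening reduction plus the remark that each piece is worth at most $\eps$, and it leaves this passage implicit. The missing idea is a rounding step; no change to the algorithm is needed.
\begin{itemize}
\item Run Algorithm~\ref{algorithm:fefeps} as stated, so its output is $\FEFx$ with respect to sets of pieces.
\item Given a feasible $y\le x_j$, round \emph{every} coordinate down to a multiple of $\eps/L$. You removed only one piece, which is why you were left with a still-fractional $y'$.
\item The result $S\le y$ is a set of pieces. It lies in $\mathcal{F}_i$ by downward closure, which your argument also needs. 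By the per-good Lipschitz bound, $v_i(y)-v_i(S)\le m\eps$.
\item If $S\subsetneq A_j$, then $\FEFx$ gives $v_i(S)\le v_i(x_i)$. If $S=A_j$, then $y=x_j$, and dropping one piece costs at most $\eps$.
\item The charity is handled the same way, since termination only rules out envy towards sets of pieces.
\end{itemize}
This yields envy at most $m\eps$. That already suffices for Theorem~\ref{theorem:generalconstraints}, or you can cut each good into $mL/\eps$ pieces instead. Under generalized assignment constraints with additive valuations, you can recover the stated bound $\eps$ directly. Take $y$ to be a greedy fractional-knapsack optimum; it has at most one coordinate outside $\{0,x_{j,g}\}$, so rounding loses at most one piece.
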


\begin{proof}
    In Step 1, Algorithm \ref{algorithm:fefeps} splits each good into $L/\eps$ pieces and then has the exact same steps as Algorithm \ref{algorithm:fefx}, treating each piece as an individual good.
    \citet{BKSS23} show that Algorithm \ref{algorithm:fefeps} terminates with an \FEFx{} allocation. Using the fact that each piece constructed in Step 1 in Algorithm \ref{algorithm:fefeps} has value at most $\eps$, we have that envy between any two agents is at most $\eps$.
\end{proof}

We now state the main result of this section. 
\begin{theorem}
\label{theorem:generalconstraints}
    Under closed agent-specific constraints and Lipschitz-continuous valuations, there always exists an $\FEF$ allocation.
\end{theorem}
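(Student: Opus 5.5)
The plan is a compactness-and-limit argument built on Lemma~\ref{lemma:fefeps}. Fix a sequence $\eps_k \downarrow 0$ (say $\eps_k = \eps_0/2^k$, so that the piece grids of Algorithm~\ref{algorithm:fefeps} refine one another). For each $k$, Lemma~\ref{lemma:fefeps} gives a feasible allocation $x^{(k)} = (x^{(k)}_1,\ldots,x^{(k)}_n)$ that is $\FEF$-$\eps_k$, including towards the charity, since Algorithm~\ref{algorithm:fefeps} stops only when the charity is unenvied up to a piece. Each $x^{(k)}$ lies in the compact set $[0,1]^{n\times m}$, so we may pass to a convergent subsequence $x^{(k)} \to x^*$.

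\textbf{Step 1: $x^*$ is a feasible allocation.} The constraints $\sum_i x_{i,g}\le 1$ are closed, so they hold in the limit. Since $x^{(k)}_i \in \mathcal{F}_i$ and $\mathcal{F}_i$ is closed, we also get $x^*_i \in \mathcal{F}_i$. The charity bundle $x^*_{\charity} = \lim_k x^{(k)}_{\charity}$ is determined continuously by $x^*$.

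\textbf{Step 2: transfer of the envy bounds.} For agents $i$ and $j$ (with $j$ possibly the charity), define
\[
\Phi_{ij}(x) = \max\{v_i(y) : y \le x_j,\ y\in\mathcal{F}_i\}.
\]
This maximum is attained because the feasible set is compact and $v_i$ is continuous. We know $\Phi_{ij}(x^{(k)}) \le v_i(x^{(k)}_i) + \eps_k$, and $v_i(x^{(k)}_i)\to v_i(x^*_i)$ by Lipschitz continuity. So it suffices to show $\Phi_{ij}(x^*) \le \liminf_k \Phi_{ij}(x^{(k)})$, i.e., lower semicontinuity of $\Phi_{ij}$ along the sequence. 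Concretely, take an optimal $y^*\le x^*_j$ with $y^*\in\mathcal{F}_i$. We want bundles $y^{(k)} \le x^{(k)}_j$ with $y^{(k)}\in\mathcal{F}_i$ and $v_i(y^{(k)}) \ge v_i(y^*) - o(1)$. Once this holds, letting $k\to\infty$ gives $v_i(y^*) \le v_i(x^*_i)$, which is exactly the $\FEF$ condition of Definition~\ref{def:fef} towards $j$ and towards the charity.

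\textbf{Main obstacle.} The hard step is the lower-semicontinuity claim in Step 2. Closedness of $\mathcal{F}_i$ only gives upper semicontinuity of $\Phi_{ij}$. The natural candidate $\min(y^*, x^{(k)}_j)$ is close to $y^*$, but without downward closure it need not lie in $\mathcal{F}_i$. My first attempt is to exploit structure in the sequence rather than in $\mathcal{F}_i$. Because the grids are nested, I would choose the subsequence so that every coordinate $x^{(k)}_{j,g}$ converges monotonically. On coordinates where it converges from above, we have $y^*_g \le x^*_{j,g} \le x^{(k)}_{j,g}$, so $y^*$ itself is dominated there.

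For coordinates converging from below, I would argue from the algorithm's dynamics. Bundles in Algorithm~\ref{algorithm:fefeps} are replaced only by minimally envied subsets of the charity, and each agent's value is nondecreasing over time. I would use this to show that a violation at $x^*$ forces, for large $k$, an envied feasible set inside $x^{(k)}_j \cup x^{(k)}_{\charity}$ (mass lost by $j$ flows to the charity or to other agents), contradicting termination of the algorithm.

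If this direct route fails, the fallback is a perturbation argument. Take a closed $\delta$-neighbourhood of $\mathcal{F}_i$, run the construction with those relaxed constraints, and then let $\delta\to 0$ by a diagonal argument. Closedness of $\mathcal{F}_i$ is used so that the intersection of the neighbourhoods is $\mathcal{F}_i$ itself.
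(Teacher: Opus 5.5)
Your outline is the same as the paper's: run Algorithm~\ref{algorithm:fefeps} for $\eps_k\to0$, extract a convergent subsequence, use closedness for feasibility, and pass the envy bounds to the limit. You also correctly identify the step the paper dismisses with ``due to Lipschitz-continuity'': one needs feasible $y^{(k)}\le x^{(k)}_j$ that approximate an optimal $y^*\le x^*_j$. However, you leave this step open. Neither the monotone-coordinate/algorithm-dynamics route nor the $\delta$-neighbourhood fallback can close it, because under closedness alone the statement is false.

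Here is a counterexample. Take one good, two agents, $v_1(x)=v_2(x)=x$, and $\mathcal{F}_1=\mathcal{F}_2=\{0,1\}\subseteq[0,1]$, which is closed. The feasible allocations are $(0,0)$, $(1,0)$ and $(0,1)$. In the first, each agent envies the charity. In the other two, the agent holding $0$ envies the other agent's whole bundle. So no \FEF{} allocation exists.

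In this instance the breakdown already occurs in Lemma~\ref{lemma:fefeps}, on which your opening step relies. For every $\eps$, the algorithm gives all pieces to one agent. The \FEFx{} condition is vacuous because no nonempty strict subset is feasible. The other agent's envy is therefore $1$, not $\eps$.

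The missing ingredient is downward closure of each $\mathcal{F}_i$ inside $[0,1]^m$, i.e.\ $y\le z\in\mathcal{F}_i\Rightarrow y\in\mathcal{F}_i$. Generalized assignment constraints satisfy this, and the paper's argument uses it tacitly. In Lemma~\ref{lemma:fefeps}, passing from \FEFx{} to \FEF{}-$\eps$ requires that deleting one piece from a feasible bundle keeps it feasible. In addition, a fractional $y\le x_j$ has to be rounded down to the piece grid, at a cost of $O(m\eps)$.

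With this hypothesis, your own first candidate finishes Step~2 immediately. Set $y^{(k)}=\min(y^*,x^{(k)}_j)$. It lies in $\mathcal{F}_i$, satisfies $y^{(k)}\le x^{(k)}_j$, and obeys $\|y^{(k)}-y^*\|_\infty\le\|x^{(k)}_j-x^*_j\|_\infty\to0$. Lipschitz continuity then gives $v_i(y^*)\le\lim_k\bigl(v_i(x^{(k)}_i)+\eps_k\bigr)=v_i(x^*_i)$. The charity is handled the same way, since $x^{(k)}_{\charity}\to x^*_{\charity}$. Neither nested grids nor the algorithm's dynamics are needed.
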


\begin{proof}

    From Lemma \ref{lemma:fefeps}, we know that for any fixed $\eps>0$, the output of Algorithm \ref{algorithm:fefeps} is $\FEF$-$\eps$.
    Consider the output of Algorithm \ref{algorithm:fefeps} for a decreasing sequence of $\varepsilon$, say $\varepsilon_1, \varepsilon_2, \dots $, such that $\varepsilon_k \in (0,1)$ and $\varepsilon_1 \geq \varepsilon_2 \geq \varepsilon_3 \geq \dots $.
    As the space of allocations $[0,1]^{n \times m}$  is a bounded and compact set, the Bolzano-Weierstrass theorem implies that, in any infinite sequence with elements in $[0,1]^{n \times m}$, there exists a convergent subsequence. Write $(x_1^*, x_2^*, \dots, x_n^*)$ to denote the limit of the sequence. Because of the \emph{closedness} property of the constraints set, we have that the limit of any sequence of feasible allocations is also a feasible allocation. Since for every $\eps > 0$, the allocation returned has envy at most $\eps$, and due to Lipschitz-continuity, the limit point $(x_1^*, x_2^*, \dots, x_n^*)$ has zero envy. This establishes the existence of $\FEF$ allocations under closed agent-specific constraints and Lipschitz-continuous valuation functions.
\end{proof}

We note that this result, as a corollary, also gives an alternative proof for the existence of $\FEF$ allocations under generalized assignment constraints. The proof, however, does not directly lead to an algorithm to compute such an allocation. 


\section{Non-Convexity of FEF Allocations}
\label{sec:non-convex}
In this section, we show that even under generalized assignment constraints, the space of $\FEF$ allocations can be non-convex. Given this, the fact that the algorithm of~\citet{BKSS23} employs linear programming as a subroutine in the computation of $\FEF$ allocations is somewhat surprising.

\begin{theorem}
\label{thm:non-convex}
    There are fair division instances with constraints, in which the space of $\FEF$ allocations is not convex.
\end{theorem}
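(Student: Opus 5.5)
The plan is to exhibit an explicit small instance under generalized assignment constraints, together with two $\FEF$ allocations $x$ and $y$, and then to show that the midpoint $z=\tfrac12(x+y)$ is not $\FEF$. The structural reason to expect this is as follows. For a fixed agent $i$, the quantity $f_i(w)=\max\{v_i(u): u\le w,\ u\in\mathcal{F}_i\}$ is the value of a fractional knapsack LP whose right-hand side is $w$, so $f_i$ is concave in $w$. The $\FEF$ condition $v_i(x_i)\ge f_i(x_j)$ compares a linear function with a concave one. A set of the form $\{f\le \text{linear}\}$ with $f$ concave is generally \emph{not} convex, because averaging two bundles can raise $f$ above the average of its values. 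I will therefore look for an instance where averaging merges two "spread-out" bundles into one from which a budget-constrained agent can extract strictly more than the average.

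\textbf{Candidate instance.} I would use two agents and two goods $a,b$. Agent $1$ is budget-constrained: $B_1=1$, with $s_1(a)=s_1(b)=1$. Agent $2$ is effectively unconstrained, so her $\FEF$ conditions reduce to ordinary envy-freeness. The values should make agent $1$'s own bundle swing linearly while the bundle she may envy changes shape. Concretely, in $x$ agent $1$ gets all of $a$ and agent $2$ gets all of $b$; in $y$ the roles are swapped. In $z$ both agents hold $(\tfrac12,\tfrac12)$, so agent $1$'s own value is the average $\tfrac12(v_{1,a}+v_{1,b})$.

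The threat I want is the following. At the extreme points, the other bundle is a single good, whose value to agent $1$ is capped by her budget. At the midpoint, agent $1$ can assemble a feasible sub-bundle of agent $2$'s bundle that mixes the high-density parts. If that does not happen with unit sizes, I will change the sizes so that $s_1(a)$ is small and $s_1(b)$ large; the budget cap then bites differently on each good and produces a genuine kink in $f_1$.

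\textbf{Fallback design.} I would also check a one-agent instance, where the only constraint is envy towards the charity, since there the condition reads $v_1(x_1)\ge f_1(\mathbf 1-x_1)$. The steps in either case are:
\begin{enumerate}
\item fix the numbers;
\item verify feasibility and the $\FEF$ inequalities at $x$ and $y$, each a small knapsack calculation;
\item compute $f_1$ (or $f_2$, or the value against the charity) at $z$ and show it strictly exceeds the owner's value.
\end{enumerate}

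\textbf{Main obstacle.} The hard part is choosing parameters so that both endpoints really are $\FEF$ while the midpoint is not. Naive symmetric choices, such as unit sizes and unit values, make $f$ piecewise linear in exactly the way the averaging preserves, and then convexity survives. I would first try asymmetric sizes for the constrained agent, for example $s_1(a)=1$, $s_1(b)=2$, $B_1=1$, with values chosen so that good $b$ has the higher density. I would then tune the values by solving the few linear inequalities that express "$\FEF$ at $x$ and $y$" together with the single strict reverse inequality at $z$.
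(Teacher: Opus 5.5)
\textbf{There is a genuine gap: no verified instance.} Your concavity heuristic correctly explains why non-convexity can occur, and it is exactly what the paper's example exploits. For this theorem, though, the explicit verified instance \emph{is} the proof, and you never produce one.

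The one concrete candidate you give cannot work. Take $s_1(a)=s_1(b)=B_1=1$, agent $2$ unconstrained, and $x,y$ the two swap allocations. The charity is empty throughout. Every bundle $(t,1-t)$ that agent $2$ holds along the segment from $x$ to $y$ fits entirely within agent $1$'s budget, so $f_1$ coincides with $v_1$ there. Hence all \FEF{} constraints along the segment are linear, and \FEF{} at $x$ and $y$ forces \FEF{} at $z$, whatever the values.

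Your proposed repair $s_1(b)=2$ is not a matter of tuning values: it makes $y$ infeasible, since agent $1$ can no longer hold all of $b$. So the allocations themselves must change. The one-agent fallback is also left unchecked.

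\textbf{What has to be supplied.} You need two feasible allocations with three properties: the critical envy constraint is tight at both; the two envied bundles lie on different linear pieces of $f_i$ (the budget binds on one but not the other); and every other \FEF{} constraint still holds. This typically requires fractional bundles and/or a nonempty charity.

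The paper does this by making agent $2$ the constrained agent, with $v_2=(1,1)$, $s_2=(1,2)$, $B_2=1$; agent $1$ has $v_1=(2,1)$, unit sizes, $B_1=1$. In $x$, agent $1$ gets $g_2$ and $g_1$ is split between agent $2$ and the charity. In $y$, each agent gets half of $g_1$ and the charity gets $g_2$. At the midpoint, agent $2$ can take $0.25$ of $g_1$ and $0.375$ of $g_2$ from agent $1$'s bundle, worth $0.625>0.5$.

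Your own design can also be completed, but only with different allocations. Let agent $1$ have $v_1=(1,3)$, $s_1=(1,2)$, $B_1=1$, and agent $2$ have $v_2=(1,6)$, unit sizes, $B_2=2$. Set $x_1=(0,\tfrac12)$, $x_2=(1,\tfrac12)$, and $y_1=y_2=(0,\tfrac25)$ with charity $(1,\tfrac15)$. Both allocations are \FEF{}, and agent $1$'s constraints are tight in $y$. In $z$, agent $1$ holds $(0,\tfrac{9}{20})$, worth $\tfrac{27}{20}$. From agent $2$'s bundle $(\tfrac12,\tfrac{9}{20})$ she can take all of $b$ plus $\tfrac1{10}$ of $a$, worth $\tfrac{29}{20}$, so $z$ is not \FEF{}.
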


\begin{proof}
We use an instance with two agents $1$ and $2$ and two goods $g_1$ and $g_2$. Agent $1$ has values $v_{1,g_1}=2$ for good $g_1$ and $v_{1,g_2}=1$ for good $g_2$ and size $s_{1,g_1}=s_{1,g_2}=1$ for both goods. Agent $2$ has value $v_{2,g_1}=v_{2,g_2}=1$ for both goods, and sizes $s_{2,g_1}=1$ for good $g_1$ and $s_{2,g_2}=2$ for good $g_2$. Both agents have a budget of $B_1=B_2=1$.

Consider the two allocations $x$ and $y$. In allocation $x$, agent $1$ gets good $g_2$, agent $2$ gets half of good $g_1$, and the charity gets the other half of good $g_1$. In allocation $y$, agents $1$ and $2$ get half of good $g_1$ and the charity gets good $g_2$. Both allocations are feasible as the total size of goods allocated to each agent does not exceed the budget of $1$. Indeed, agent $1$ has size $1$ for each good and gets at most one good in each allocation. Agent $2$ gets half of item $g_1$ in both allocations.

We claim that both allocations $x$ and $y$ are $\FEF$. In allocation $x$, agent $1$ has value $1$ for the good $g_2$ allocated to her. Her value for half of good $g_1$ that agent $2$ and the charity get is only $1$. Agent $2$ has value $0.5$ for half of the good $g_1$ allocated to her. She clearly does not envy the charity that gets the other half of good $g_1$. Regarding the bundle of agent $1$, only half of the good $g_2$ that is allocated to her is a feasible set for agent $2$ (as $s_{2,g_2}=2$ and $B_2 =1$), which gives her a value of $0.5$, too. 

In allocation $y$, both agents $1$ and $2$ get half of good $g_1$ and do not envy each other. Agent $1$'s value is $1$ while her value for good $g_2$ that the charity gets is $1$, too. Agent $2$'s value is $0.5$ and only half of the good $g_2$ that is allocated to the charity is feasible for her, for a value of $0.5$, too. So, no agent envies the charity.

Now, consider the allocation $z$ which is the mean of the two allocations, i.e., agent $1$ gets $0.25$ of good $g_1$ and half of good $g_2$, while agent $2$ gets half of good $g_1$. Now, observe that agent $2$ envies agent $1$. Her value for half of good $g_1$ is $0.5$. Among the goods allocated to agent $1$, $0.25$ of good $g_1$ and $0.375$ of good $g_2$ form a feasible bundle (as agent $2$ has sizes $1$ and $2$ for these goods). Agent $2$'s value for this bundle is $0.625$, i.e., higher than the value for her bundle. Hence, allocation $z$ is not $\FEF$.
\end{proof}

\section{Existence of \FEF{} and \PO{} Allocations} 
\label{sec:fefpo}
This section establishes, via a fixed-point argument, that an \FEF{} and \PO{} allocation always exists under the generalized assignment constraints and additive valuations. We utilize ideas from \cite{caragiannisppad} and \cite{hansenppad} to prove the existential guarantee and also establish the {\rm PPAD} membership of the corresponding computation problem. 

Using a fixed-point argument to prove existence involves construction of a continuous map from some convex domain set $S$ onto itself, i.e., $f:S \mapsto S$, and showing that the fixed-point of $f$ (guaranteed by Brouwer's fixed-point theorem) has the desired properties. If the function $f$ can be expressed as an arithmetic circuit using the gates $\{+,-,\max, \min, \times \zeta \}$ (where the last gate corresponds to multiplication by a rational constant), then the argument establishes PPAD-membership as well.

\citet{hansenppad} construct another gate, called linear-OPT-gate, which can be used to solve convex optimization problems satisfying certain conditions.  Suppose that the function $f$ involves solving a convex optimization problem. Then we replace the convex program by a linear-OPT gate to get a  \emph{pseudo-circuit}; under the gate’s standard preconditions, at every fixed point of the pseudo-circuit the gate’s output equals an optimal (or feasible, as appropriate) solution to that program. Additionally, they show that any pseudo-circuit using linear-OPT gates can be compiled into a pure piecewise-linear (PL) circuit with the same fixed points; since the compiled PL circuit maps a compact convex set to itself, Brouwer’s fixed-point theorem guarantees a fixed point (and thus the original pseudo-circuit has one as well). See Section $3$ and Theorem $3.1$ in \cite{hansenppad} for a detailed explanation.

In their work, a linear-OPT-gate construction is given for two types of convex optimization problems, linear programs and feasibility programs. We briefly give the description of both  problems and the pre-requisite conditions (for the construction of linear-OPT-gate). The optimization problems might be parameterized by some variables of the domain. These are referred to as gate-inputs for the problem. Also, $R$ is assumed to be some pre-defined constant in the programs below.
\medskip

\noindent\textbf{Linear Program}
\begin{align*}
\min\ c^\top x  \quad  & \text{s.t.} \quad  Ax \le b \ \  \text{ and } \ \ x \in [-R,R]^n .
\end{align*}

A linear-OPT-gate can be constructed for the above linear program provided that:
\begin{enumerate}
    \item The feasible region is non-empty and contained in a known box $[-R,R]^n$.
    \item All gate inputs appear only on the right-hand side of the constraints or in the objective: the matrix $A$ is fixed (independent of gate-inputs), while $b$ and the objective coefficients $c$ may depend on the gate-inputs.
\end{enumerate}
\medskip
\noindent\textbf{Feasibility Program}
\begin{align*}
h(y) > 0 &\implies a^Tx \le b \\
&x \in [-R,R]^n .
\end{align*}

An OPT gate can be constructed for the above program if 
\begin{enumerate}
    \item The program is feasible.
    \item the gate inputs appear only on the right-hand side of the constraints $a^Tx \le b$.
    \item In the condition constraint $h(y) > 0$, $h$ is computable using a linear arithmetic circuit and the variables are gate inputs ($x$ is not permitted).
\end{enumerate}

Given a fair division instance, we construct a linear arithmetic circuit such that the fixed-point of the circuit corresponds to an \FEF{} and \PO{} allocation proving existence and PPAD-membership of the problem.

Before stating and proving our main result, we establish some key lemmas. First, we use the following observation to characterize the set of Pareto-optimal allocations.

\begin{fact}
\label{theorem:WtSW-PO}
Let $w \in \mathbb{R}_+^n$ be any $n$-dimensional (weight) vector, with $w_i >0$ for each $i \in [n]$. Then, any feasible allocation $x=(x_1, \ldots, x_n)$ that maximizes $\langle w , v(x) \rangle := \sum_{i=1}^n w_i \cdot v_i(x_i)$
 is Pareto-optimal. 
\end{fact}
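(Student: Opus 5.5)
The plan is a direct proof by contradiction that uses only the definition of Pareto-optimality among feasible allocations. Fix a weight vector $w$ with $w_i>0$ for all $i\in[n]$, and let $x=(x_1,\ldots,x_n)$ be a feasible allocation maximizing $\langle w, v(x)\rangle$ over all feasible allocations. Suppose, for contradiction, that $x$ is not Pareto-optimal. Then there is a feasible allocation $z=(z_1,\ldots,z_n)$ that Pareto-dominates $x$: $v_i(z_i)\ge v_i(x_i)$ for every $i\in[n]$, with strict inequality for at least one agent $i^*$.

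The key step is to compare weighted welfare term by term. Since each $w_i\ge 0$, we have $w_i v_i(z_i)\ge w_i v_i(x_i)$ for all $i$. For $i^*$, strict positivity of $w_{i^*}$ gives $w_{i^*}v_{i^*}(z_{i^*})> w_{i^*}v_{i^*}(x_{i^*})$. Summing over $i$ then yields $\langle w, v(z)\rangle > \langle w, v(x)\rangle$. But $z$ is feasible, so this contradicts the maximality of $x$. Hence $x$ is Pareto-optimal.

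There is no real obstacle here. The only points to check are that the competitor $z$ ranges over the same set (feasible allocations) as the one over which $x$ maximizes, which matches the paper's constrained definition of Pareto-optimality, and that the strict positivity of $w$ is what makes the one strict inequality survive the weighting. Note also that the argument does not use additivity of the valuations, nor that the charity receives whatever is unassigned. Existence of a maximizer is not needed, because the statement only concerns allocations that do maximize. In the generalized-assignment setting such a maximizer does exist anyway, since the objective is linear and the feasible set is compact.
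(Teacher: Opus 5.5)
Your proof is correct. The paper states this as an observation without giving a proof, and your direct argument (a feasible Pareto-dominating allocation would have strictly larger weighted welfare because every $w_i>0$, contradicting maximality over the same feasible set) is exactly the standard reasoning it relies on.
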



Second, we define the concept of \emph{envy graph} for a given allocation $x$ to capture the existing envy among the agents. For an allocation $x$, an envy graph is a directed graph with $n$ nodes and a directed edge from agent $i$ to $h$ if $i$ envies $h$ (according to Definition \ref{def:fef}). Note that if an allocation is $\PO$, it is implied that no agent envies the charity. Otherwise, the envied bundle from the charity can be swapped with the envious agent leading to a Pareto improvement.

\begin{lemma} 
\label{lemma:acyclic}
For any (weight) vector $w \in \mathbb{R}_+^n$, with all positive components, and any feasible allocation $x=(x_1, \ldots, x_n)$ that maximizes $\sum_{i=1}^n w_i \ v_i(x_i)$, the envy graph is acyclic. 
\end{lemma}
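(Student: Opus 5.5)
Suppose, for contradiction, that the envy graph of a weighted-welfare maximizer $x$ has a directed cycle $i_1 \to i_2 \to \cdots \to i_k \to i_1$. The plan is to use the envied sub-bundles along this cycle to build another feasible allocation $x'$ with strictly larger weighted welfare $\sum_i w_i v_i(x'_i)$. This contradicts the optimality of $x$.

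First, I fix the witnesses of envy. Since $i_t$ envies $i_{t+1}$ (indices mod $k$), there is a bundle $y_t \le x_{i_{t+1}}$ with $y_t \in \mathcal{F}_{i_t}$ and $v_{i_t}(y_t) > v_{i_t}(x_{i_t})$.

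The naive rotation $x'_{i_t} = y_t$ is feasible and strictly improves every agent on the cycle, but it does not by itself raise the weighted sum. The goods $x_{i_{t+1}} - y_t$ go to charity, so agent $i_{t+1}$ loses her whole old bundle, and the weights differ across agents. One has to compare $\sum_t w_{i_t} v_{i_t}(y_t)$ with $\sum_t w_{i_t} v_{i_t}(x_{i_t})$. Agent $i_t$ gains in the first sum, so term by term $w_{i_t} v_{i_t}(y_t) > w_{i_t} v_{i_t}(x_{i_t})$. Summing over $t$ gives
\[
\sum_{t} w_{i_t} v_{i_t}(x'_{i_t}) \;>\; \sum_t w_{i_t} v_{i_t}(x_{i_t}),
\]
because the left side is each agent's new value and the right side each agent's old value, with the same weights. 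The weights never get mixed across agents.

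So define $x'$ by $x'_{i_t} = y_t$ for each agent on the cycle and $x'_j = x_j$ for every other agent $j$, with leftovers going to charity. I then verify three things.

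\begin{itemize}
\item \emph{Feasibility of bundles.} Each $y_t \in \mathcal{F}_{i_t}$ by choice.
\item \emph{Goods constraint.} The bundles $y_t \le x_{i_{t+1}}$ are supported on pairwise distinct agents' old bundles, because the cycle is simple. Hence $\sum_t y_t + \sum_{j \notin \text{cycle}} x_j \le \sum_i x_i \le \mathbf{1}$ componentwise.
\item \emph{Welfare change.} Non-cycle agents are unchanged, and each cycle agent's value strictly increases. With $w_{i_t} > 0$ this gives a strict increase in weighted welfare, contradicting the maximality of $x$.
\end{itemize}

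The main point requiring care is confirming the convention that each agent's bundle in $x'$ comes only from a distinct predecessor's bundle. A simple cycle always exists inside any cycle, so I would reduce to a simple cycle first. I also do not need to worry about the weights, as noted above. With additive valuations, no further continuity argument is needed. Hence the envy graph is acyclic.
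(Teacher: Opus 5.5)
Your proof is correct and is essentially the paper's argument: you rotate the envied feasible sub-bundles $y_t \le x_{i_{t+1}}$ around the cycle and send the leftovers to charity, so every agent on the cycle strictly gains, the others are unchanged, and positive weights give a strict increase in weighted welfare. You add a reduction to a simple cycle to secure the goods constraint, which the paper leaves implicit; also, your remark that the naive rotation ``does not by itself raise the weighted sum'' is spurious, since your own term-by-term comparison shows that this rotation is exactly what works.
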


\begin{proof}
    This is a standard observation in the unconstrained setting. We prove that the lemma holds even in the constrained setup.
    Assume, towards a contradiction, that the envy graph for $x$ contains a cycle among the agents: $i_1 \rightarrow i_2 \rightarrow \dots \rightarrow i_c \rightarrow i_1$. Here, agent ${i_{j}}$ envies ${i_{j+1}}$, for all $j \in [c]$, and we label ${i_{c+1}} = i_1$. By definition of feasible envy, it follows that there exists a feasible envied bundle (according to agent $i_j$) in agent $i_{j+1}$'s bundle. For every agent $i_j$, denote by $y_{i_j}$ the envied bundle in agent $i_{j+1}$'s bundle. Define a new allocation $x'=(x'_1, \ldots, x'_n)$ as follows: for all agents $i \in \{i_1,i_2, \dots, i_c \}$, set $x'_i = y_i$. For all other agents, set $x'_i = x_i$. We discard the unassigned fractions of the goods to the charity. Going from allocation $x$ to $x'$, the values of the agents in the cycle strictly increase, and for all the other agents the values remain unchanged. Also, all the weights satisfy $w_i>0$. Hence, the weighted social welfare of $x'$ is strictly greater than that of $x$. This, however, contradicts the optimality of $x$ with respect to the weighted social welfare $\sum_{i=1}^n w_i \cdot v_i(x_i)$. Therefore, by way of contradiction, we obtain that the envy graph of $x$ has be acyclic. The lemma follows. 
\end{proof}

For the fixed-point argument, we set the parameter $\gamma >0$  to be sufficiently small and positive such that:

\begin{enumerate}
  \item $\gamma \le \tfrac12.$

  \item $\displaystyle 
  \gamma < \min_{\substack{i,h\in[n]\\ g\in[m]}} 
  \frac{v_i(g)}{v_h(g)}.$

  \item If the set $\{(i,h,g,g'):\ \dens_i(g')>\dens_i(g)\}$ is nonempty, then
  \[
  \gamma < \frac12
  \min_{\substack{i,h\in[n]\\ g,g'\in[m]\\ \dens_i(g')>\dens_i(g)}}
  \frac{\,v_i(g') - v_i(g) \cdot \,\dfrac{s_{i,g'}}{s_{i,g}}\,}{\,v_h(g')\,}\,;
  \]
\end{enumerate}


Note that $\gamma$ is always strictly positive. Further, the sufficiently small value of $\gamma$ gives us the following lemma.  
 
\begin{lemma} 
\label{lemma:weight}
Let $w \in \mathbb{R}_+^n$ be any vector with all positive components and $x=(x_1,\ldots,x_n)$ be any feasible allocation that maximizes $\sum_{i=1}^n w_i \cdot v_i(x_i)$. Also, let $i \in [n]$ and $h \in [n]$ be any two agents with $w_h \leq \gamma w_i$. Then, agent $i$ does not envy agent $h$ under allocation $x$.
\end{lemma}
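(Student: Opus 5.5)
The plan is to argue by contradiction, using the optimality of $x$ for the weighted welfare $\sum_{k} w_k v_k(x_k)$. Suppose agent $i$ envies agent $h$. Then there is a bundle $y \le x_h$ that is feasible for $i$, i.e., $\sum_g s_{i,g} y_g \le B_i$, and satisfies $v_i(y) > v_i(x_i)$. In each case below I will exhibit a feasible allocation with strictly larger weighted welfare. Fractions removed from an agent are sent to the charity, which is always allowed. The relevant goods are those $g'$ with $y_{g'}>0$ and $v_i(g')>0$; some such good exists, and each has $x_{h,g'}>0$.

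\textbf{Case 1: agent $i$ can take a little of such a good $g'$ without violating her budget.} This happens when $\sum_g s_{i,g}x_{i,g} < B_i$, or when $s_{i,g'}=0$. Move a small fraction $\delta>0$ of $g'$ from $h$ to $i$. The weighted welfare changes by $\delta\,(w_i v_i(g') - w_h v_h(g'))$. Since $w_h \le \gamma w_i$, this is at least $\delta\, w_i\,(v_i(g') - \gamma v_h(g'))$. Condition 2 on $\gamma$ gives $\gamma v_h(g') < v_i(g')$, so the change is strictly positive, a contradiction.

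\textbf{Case 2: $i$'s budget is tight, and every relevant good has positive size for $i$.} Write $v_i(y) = \sum_{g'} \dens_i(g')\, s_{i,g'}\, y_{g'}$, so the total size of $y$ is at most $B_i$. Also $v_i(x_i) = \sum_g \dens_i(g)\, s_{i,g}\, x_{i,g}$, with total size exactly $B_i$. Goods of size zero in $x_i$ only add to $v_i(x_i)$, so they can be ignored.

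Now suppose every good $g'$ in $y$ had density at most that of every good $g$ in the support of $x_i$ with positive size. Then comparing these density-weighted averages would give $v_i(y)\le v_i(x_i)$, which is false. Hence there are goods $g'$ with $x_{h,g'}>0$ and $g$ with $x_{i,g}>0$, $s_{i,g}>0$, such that $\dens_i(g')>\dens_i(g)$.

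Perform the exchange: move $\delta$ of $g'$ from $h$ to $i$, and move $\delta\, s_{i,g'}/s_{i,g}$ of $g$ from $i$ to the charity. This keeps $i$ within budget, and for small $\delta$ all quantities stay valid. Agent $i$ gains $\delta\,(v_i(g') - v_i(g)\, s_{i,g'}/s_{i,g}) > 0$, and $h$ loses $\delta v_h(g')$. The weighted welfare therefore changes by at least
\[
\delta\, w_i\Bigl( v_i(g') - v_i(g)\tfrac{s_{i,g'}}{s_{i,g}} - \gamma\, v_h(g')\Bigr).
\]
This is positive, since condition 3 gives $\gamma v_h(g') < \tfrac12\bigl(v_i(g') - v_i(g)\, s_{i,g'}/s_{i,g}\bigr)$. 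This again contradicts optimality.

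\textbf{Main obstacle.} The delicate step is the averaging argument in Case 2, which produces a density-improving pair $(g,g')$ with $g'$ actually held by $h$ and $g$ actually held by $i$. The degenerate situations must also be routed correctly:
\begin{itemize}
\item zero sizes are handled by Case 1;
\item if $v_h(g')=0$, then condition 2 is vacuous or undefined, but moving $g'$ from $h$ to $i$ costs $h$ nothing, so the relevant case goes through trivially;
\item $B_i=0$ with all relevant sizes positive makes $y$ valueless, so $i$ cannot envy $h$.
\end{itemize}
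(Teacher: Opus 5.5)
Your proof is correct and follows the same route as the paper: the slack-budget case is closed by a single transfer of $g'$ from $h$ to $i$ together with condition 2 on $\gamma$, and the tight-budget case by a density-improving exchange together with condition 3. Your averaging step, which extracts the pair $(g,g')$ directly from the envied bundle $y$, supplies the justification the paper merely asserts (that the density ordering between $x_i$ and $x_h$ rules out envy), and you handle the zero-size, zero-value and $B_i=0$ cases more carefully than the paper does.
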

\begin{proof}
     For the analysis, we consider the cases where $s_i(x_i) < B_i$ and $s_i(x_i) = B_i$ separately.
     
     For the first case, we show that $s_i(x_i) < B_i \implies s_i(x_h) = 0$, which implies envy-freeness. If $s_i(x_i) < B_i$ and $s_i(x_h) > 0$, the second constraint on $\gamma$ states that a small fraction of (any) good in $x_h$ can be transferred from agent $h$ to agent $i$ leading to an increase in the objective function, contradicting the assumption on $x$.
     
     If $s_i(x_i) = B_i$, we prove that, $w_h \le \gamma \cdot w_i  \implies \dens_i(g) \geq \dens_i(g')$ for all goods $g$, with $x_{i,g} > 0$, and $g'$, with $x_{h,g'} > 0$. These  density relations between the goods in $i$'s bundle and the ones in $h$'s bundle imply that $i$ does not envy $h$.  
    
    Assume, towards a contradiction, that there exist goods $g, g'$ that violate the above inequality, i.e., $\dens_i(g) < \dens_i(g')$ and $x_{i,g} > 0$ along with $x'_{h,g'}>0$. We will prove that, in such a case, we can construct a new allocation $x'$ that has a higher weighted 
    social welfare than $x$.  
    
    To construct $x'$, keep the bundles of the remaining agents (except $i$ and $h$) the same as in $x$. Write $\alpha = \min \left\{ x_{h,g'},  \frac{x_{i,g} \cdot s_{i,g}} {s_{i,g'}} \right\}$ and $\beta = \alpha \cdot \frac{s_{i,g'}}{s_{i,g}}$. Set $x'_{i,g'} = x_{i,g'} + \alpha$ and $x'_{h,g'} = x_{h,g'} - \alpha$ along with $x'_{i,g} = x_{i,g} - \beta$. 
    The allocations for items other than $g$ and $g'$ remain unchanged. 
    
    From agent $h$, a fraction of good $g'$ is transferred to agent $i$. Hence, the quantity $s_h(x_h')$ is strictly smaller than $s_h(x_h)$,  which implies that the budget constraint continues to hold for agent $h$.  For the size of agent $i$'s bundle, we have $s_i(x_i') = s_i(x_i) + \alpha \cdot s_i(g') - \beta \cdot s_i(g) = s_i(x_i)$, where the last equality follows from the definitions of $\alpha$ and $\beta$. It follows that the allocation $x'$ is feasible.

    Finally, we show that $w_i \cdot v_i(x'_i) + w_h \cdot v_h(x'_h) > w_i\cdot v_i(x_i) + w_h\cdot v_h(x_h)$. As the bundles of the rest of the agents remain unchanged, we arrive at a contradiction to the fact that $x$ maximizes $\sum_{i=1}^n w_i\cdot v_i(x_i)$.
    Note that $w_i \cdot v_i(x'_i) + w_h \cdot v_h(x'_h) = w_i \cdot v_i(x_i) + w_h \cdot v_h(x_h) + w_i \cdot \alpha \cdot v_i(g') - w_h \cdot \alpha v_h(g') - w_i \cdot \beta \cdot v_i(g)$. Hence, it suffices to prove that $w_i \cdot \alpha \cdot v_i(g') - w_h \cdot \alpha \cdot v_h(g') - w_i \cdot \beta \cdot v_i(g) > 0$.

\begin{align*}
&w_i \cdot \alpha \cdot v_i(g') - w_h \cdot \alpha \cdot v_h(g') - w_i \cdot \beta \cdot v_i(g)\\
&= \alpha \cdot w_i\cdot \left( v_i(g') - \tfrac{w_h}{w_i} \cdot v_h(g') - v_i(g)\cdot \tfrac{s_{i,g'}}{s_{i,g}} \right) \\
&\ge \alpha \cdot w_i \cdot \left( v_i(g') - \gamma \cdot v_h(g') - v_i(g) \cdot \tfrac{s_{i,g'}}{s_{i,g}} \right) \\
&> 0 .
\end{align*}
    

    The last inequality follows from the last constraint in the definition of $\gamma$. The lemma follows. 
\end{proof}


 Before proving the main result, we define a map such that its fixed-point corresponds to an \FEF{} and \PO{} allocation. Write $M = \max_i  \sum_{g \in [m]} v_i(g) $ and $W_{\eps} := \{w \ge 0; \sum_{i=1}^n w_i = 1, w_i \ge \eps\}$ . Given a fair division instance, we construct an  arithmetic linear circuit that defines a map  $F(X,\mathcal{V}, W_\eps) \;\to\; (X,\mathcal{V}, W_\eps)$, where $X = [0,1]^{n \times m}$, $\mathcal{V} = [0,M]^{n \times n}$.  Here $x$ is an allocation, $w$ is a weight vector and $v_{i,j}$ is used to denote agent $i$'s value for the maximal-valued feasible subset of agent $j$'s bundle. Also, note that all variables are bounded and hence satisfy the $[-R,R]$ condition stated above.

We give the construction of $F$ in three separate parts $P_1, P_2, P_3$ such that $F(x,v,w) = (P_1(w), P_2(x), P_3(x,P_2(x)))$.
\medskip


\noindent
{\it Part $P_1(\cdot)$:}    
    The first linear program $P_1(w)$ is defined as follows.
\[
\begin{aligned}
\max_{x}\;& \sum_{i\in[n]} w_i \cdot v_i(x_i)\\
\text{s.t.}\;& \sum_{i\in[n]} x_{i,g} \;\le\; 1 
   &\quad&\forall\,g\in[m],\\
&   x_{i,g} \ge 0 &\quad&\forall\,g\in[m], i \in [n],
\\
&   \sum_{g\in[m]} s_{i,g} \cdot x_{i,g} \;\le\; B_i
   &&\forall\,i\in[n].
\end{aligned}
\]

 $P_1$ is parameterized by vector $w \in \mathbb{R}_+^n$ (which serve as gate-inputs) and it  maps to the allocations that maximize $\sum_{i=1}^n w_i v_i(x_i)$ and, hence, are Pareto-optimal. 

Also, a linear-OPT-gate can be constructed for $P_1$ as it satisfies the conditions mentioned earlier. The domain is non-empty (setting all $x_{i,g} = 0$ satisfies the constraints) and the gate-input $w$ only figures in the objective function and not in the constraints. 
\medskip

\noindent
{\it Part $P_2(\cdot)$:}   The second linear program $P_2$ is used to compute $v_{i,h}$ for allocation $x$. We write the linear program for a fixed $i$ and $h$; the analogous $LP$ is included for every pair $(i,h) \in [n]^2$.
$P_2$ takes as input allocation $x$ and its output $P_2(x)$ is said to be the matrix $V(x)= (v_{i,h})_{i,h}$ which is used as gate-input for circuit $P_3$.
\[
\begin{aligned}
\max_{y}\;& v_i(y)\\
\text{s.t.}\;& 0 \le y_g \le x_{h,g}
   &\quad&\forall\,g\in[m],\\
&   \sum_{g\in[m]} s_{i,g}\,y_{g} \;\le\; B_i
   &&\text{budget constraint for agent } i,
   \\
&   v_{i,h} = v_i(y)
\end{aligned}
\]

Here $y \in [0,1]^m$ is a set of some auxiliary variables which corresponds to agent $i$'s maximal-valued feasible subset from agent $h$'s bundle. The constraint $v_{i,h} = v_i(y)$ ensures correctness of $v_{i,h}$. 
Also, a linear-OPT-gate can be constructed for $P_2$ as the domain of the $LP$ is non-empty (setting $y = 0$ satisfies the constraints) and $x$ (the gate-input) appears only on the right-hand side of the constraints. 
\medskip

\noindent
{\it Part $P_3(\cdot)$:}   We write $P_3$ to be a feasibility program. Set $\eps = \gamma^n / n$. We state the implication for a fixed pair $(i,h)$; the analogous constraint is imposed for all pairs $[n]^2$.
\[
\begin{aligned}
v_{i,h} - v_i(x_i) > 0&\implies w_h - \gamma \cdot w_i \le 0, \\
\sum_{i=1}^n w_i &= 1 \ \text{ and }  \ \  w_i \ge \varepsilon \ \  \forall i \in [n].
\end{aligned}
\]


 $P_3$ takes as input any weighted welfare-maximizing allocation $x$ and finds a vector $w \in \mathbb{R}^n_+$ that satisfies the following constraints: $v_{i,h} > v_i(x_i) \implies w_h - \gamma \cdot w_i \le 0$, where $v_{i,h}$ is the value of the maximally valued feasible subset of agent $h$'s bundle according agent $i$. That is, in allocation $x$, if agent $i$ envies agent $h$, then the weight $w_h$ must be sufficiently smaller than $w_i$. Further, the last set of constraints ensure that each component of $w$ is at least $\varepsilon$ and, hence, is strictly positive. 

To show that a linear-OPT-gate can be constructed for $P_3$, we first observe that the gate-inputs $v_{i,j}$ and $x$ do not figure in the linear enforced constraints and only appear on the left-hand side of the conditional constraint. This is in line with the conditions stated above. Finally, we need to show that $P_3$ is feasible. We make use of the following lemma from \cite[Lemma 3.5]{caragiannisppad}.

\begin{lemma}
\label{lemma:weightvector}
    (\citet{caragiannisppad})
    Suppose $x$ is an allocation such that the envy graph of $x$ is acyclic. Then, the program $P_3$ is feasible.
\end{lemma}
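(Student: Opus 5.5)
We construct the weight vector explicitly from a topological order of the envy graph. Since the envy graph of $x$ is acyclic, we can order the agents as $\sigma(1), \sigma(2), \dots, \sigma(n)$ so that every envy edge $i \rightarrow h$ points forward in the order. That is, whenever $v_{i,h} > v_i(x_i)$, agent $i$ appears earlier than $h$. We then set geometrically decreasing weights along this order: $\tilde w_{\sigma(k)} = \gamma^{k-1}$ for $k = 1, \dots, n$, and normalize by $w = \tilde w / \sum_k \tilde w_{\sigma(k)}$.

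The conditional constraints then hold automatically. Take any pair $(i,h)$ whose premise $v_{i,h} - v_i(x_i) > 0$ is true, and write $i = \sigma(a)$, $h = \sigma(b)$. Because $i$ envies $h$, the topological order gives $b > a$. Hence
\[
w_h / w_i = \gamma^{b-a} \le \gamma ,
\]
where we use $\gamma \le 1/2 < 1$. So $w_h - \gamma w_i \le 0$. This is scale-invariant, so the normalization does not affect it. Pairs for which the premise fails impose no constraint, and $i = h$ never triggers one since $v_{i,i} \le v_i(x_i)$ holds trivially.

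Two properties remain to check.

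\emph{Normalization.} By construction $\sum_i w_i = 1$.

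\emph{Lower bound $w_i \ge \varepsilon = \gamma^n/n$.} Since $\gamma \le 1/2$, we have $\sum_k \tilde w_{\sigma(k)} = \sum_{k=0}^{n-1} \gamma^k \le n$. The smallest unnormalized weight is $\gamma^{n-1}$. Therefore
\[
\min_i w_i \ge \gamma^{n-1}/n \ge \gamma^n/n = \varepsilon .
\]
All constraints of $P_3$ are thus satisfied, and $P_3$ is feasible.

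The only step needing care is getting the direction of the topological order right: the envious agent must receive the larger weight, matching the constraint $w_h \le \gamma w_i$. The rest is routine, and the slack in $\varepsilon$ (an exponent $n$ rather than $n-1$) leaves room to spare.
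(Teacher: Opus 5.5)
Your proof is correct. The paper gives no argument of its own here and simply imports the statement from Caragiannis et al.\ (Lemma~3.5). Your construction, geometric weights $\gamma^{k-1}$ along a topological order of the envy graph followed by normalization, is the natural argument for that lemma, and the paper's choice $\varepsilon=\gamma^n/n$ is evidently calibrated to it; you even get the slightly stronger bound $\gamma^{n-1}/n$.

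The one implicit ingredient is that $v_{i,h}$ is the true maximal feasible-subset value, as supplied by $P_2(x)$. That is what makes the premise $v_{i,h}>v_i(x_i)$ coincide exactly with an edge $i\to h$ of the envy graph, and you use it correctly.
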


Lemma \ref{lemma:acyclic} shows that for all weighted social welfare maximizing allocations, the envy graph is acyclic. Applying Lemma \ref{lemma:weightvector}, we get that for any weighted social welfare maximizing allocation $x$, a valid $w$ vector can be found, i.e., $P_3$ is feasible. This concludes the construction of the function $F$. Using this function, we next prove the main result of this section.



\begin{theorem}
\label{thm:fefpo}
    For any given fair division instance with divisible goods and generalized assignment constraints, there always exists an $\FEF$ and $\PO$ allocation. Moreover, the problem of finding such an allocation is in PPAD.
\end{theorem}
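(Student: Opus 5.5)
We apply the machinery of \citet{hansenppad} to the map $F(x,v,w) = (P_1(w), P_2(x), P_3(x,P_2(x)))$ just constructed. Each of $P_1$, $P_2$ and $P_3$ satisfies the preconditions for a linear-OPT-gate, as verified above. For $P_3$, feasibility uses Lemma~\ref{lemma:acyclic} together with Lemma~\ref{lemma:weightvector}. So $F$ is a pseudo-circuit mapping the compact convex set $X\times\mathcal{V}\times W_\eps$ to itself. By Theorem~3.1 of \cite{hansenppad} it compiles into a piecewise-linear arithmetic circuit with the same fixed points. Brouwer's theorem then gives a fixed point $(x^*,v^*,w^*)$, and finding it is in PPAD. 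What remains is to show that any fixed point yields an allocation that is $\FEF$ and $\PO$.

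\textbf{Pareto-optimality.} At the fixed point, $x^* = P_1(w^*)$ is an optimal solution of the weighted welfare LP with weights $w^*$. Since $w^*\in W_\eps$, every $w^*_i\ge\eps>0$. By Fact~\ref{theorem:WtSW-PO}, $x^*$ is Pareto-optimal among feasible allocations, and in particular it respects the budget constraints.

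\textbf{No envy towards the charity.} As remarked before Lemma~\ref{lemma:acyclic}, Pareto-optimality already rules this out. Suppose agent $i$ envied a feasible $y\le x^*_{\charity}$. Replacing $x^*_i$ by $y$ would be feasible, since $y\le x^*_{\charity}$ is disjoint from the other bundles. It would strictly raise $v_i$ and leave every other agent unchanged, contradicting $\PO$.

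\textbf{No envy between agents.} At the fixed point, $v^* = P_2(x^*)$. So $v^*_{i,h}$ is exactly agent $i$'s maximum value over budget-feasible $y\le x^*_h$. Hence $i$ envies $h$ in $x^*$ (per Definition~\ref{def:fef}) if and only if $v^*_{i,h}>v_i(x^*_i)$. Moreover, $w^*$ is the output of the feasibility gate $P_3$ on input $(x^*,v^*)$. At a fixed point the gate's output satisfies the program, so whenever $i$ envies $h$ we get $w^*_h\le\gamma w^*_i$. But $x^*$ maximizes the $w^*$-weighted welfare with all weights positive. Lemma~\ref{lemma:weight} then says that $w^*_h\le \gamma w^*_i$ forces $i$ not to envy $h$, a contradiction. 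Therefore no agent envies another, and $x^*$ is $\FEF$.

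\textbf{Main obstacle.} The delicate step is justifying that at a fixed point the linear-OPT-gate outputs are genuinely an optimal solution of $P_1$ and a genuinely feasible solution of $P_3$. This is needed because the guarantee of \cite{hansenppad} holds only at fixed points and only under the gate preconditions. One point needs care for $P_3$: its feasibility must hold for the input $x^*$ actually produced, which here is an optimum of $P_1$. That is exactly what Lemmas~\ref{lemma:acyclic} and~\ref{lemma:weightvector} provide, and the choice $\eps=\gamma^n/n$ is what lets the weights built in Lemma~\ref{lemma:weightvector} lie in $W_\eps$. The strict inequality in the condition $h(y)>0$ is also important: it is compatible with the gate semantics, and envy is correctly detected through the exact values $v^*_{i,h}$.
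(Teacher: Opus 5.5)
Your proposal is correct and follows the paper's proof essentially step by step. A fixed point of the compiled circuit gives $x^*=P_1(w^*)$ with all $w^*_i\ge\eps>0$, hence $\PO$ by Fact~\ref{theorem:WtSW-PO}; and any envy from $i$ to $h$ would force $w^*_h\le\gamma w^*_i$ through $P_3$, contradicting Lemma~\ref{lemma:weight}. Your explicit arguments for no envy towards the charity (via $\PO$) and for needing $P_3$ feasible only at the fixed-point input $x^*$ (a welfare maximizer) spell out points the paper covers in its surrounding remarks rather than in the proof itself.
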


\begin{proof}



As noted earlier, the circuit we construct using linear-OPT-gates compiles to a piecewise-linear map which preserves fixed points. Further, by Brouwer's fixed-point theorem, we have that there exists at least one fixed-point  for the map $F$ (for the piecewise-linear map and, hence for $F$ due to preservation of fixed points). Let one such fixed-point be $(x^*,v^*,w^*)$. We show that $x^*$ is an \FEF{} and \PO{} allocation. 

First, observe that $x^*$ maximizes $\sum_{i\in[n]} w_i^*\cdot v_i(x_i)$ and hence is Pareto-optimal. Next, we show that $x^*$ is \FEF{}. Assume towards a contradiction, that $x^*$ is not $\FEF$, i.e., an agent $i$ (feasibly) envies $h$. As $v_{i,h} > v_i(x_i)$, by the constraints in $P_3$ we have, $w^*_h \leq \gamma w^*_i$. But, Lemma \ref{lemma:weight} implies the $i$ cannot envy $h$ (as $w^*_h \leq \gamma \cdot w^*_i$). This leads to a contradiction, and the theorem follows.
\end{proof}




\section{Are Fairness, Efficiency, and Truthfulness Compatible?}
\label{sec:counterexample}

We now prove that truthful mechanisms that compute \FEF{} and \PO{} allocations do not exist.\footnote{We remark that the proof of Theorem~\ref{thm:impossibility} can be obtained from~\cite{gp22} and 
\cite{Momi2017}. We present an alternative proof here.} Actually, our proof uses a fair division instance without constraints. Then, \FEF{} is identical to classical envy-freeness, and Pareto-optimality implies that all the goods are allocated to the agents and not to the charity.

\begin{theorem}\label{thm:impossibility}
    There does not exist a truthful mechanism that always outputs envy-free and Pareto-optimal allocations. 
\end{theorem}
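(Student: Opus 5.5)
The plan is to refute the statement with a single unconstrained instance with two agents and two goods. In this setting \FEF{} is ordinary envy-freeness, and Pareto-optimality forces every good to be fully allocated. The idea is to find a profile where any truthful, envy-free, Pareto-optimal mechanism would have to favour both agents at once, which is impossible.

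\textbf{The instance and its outcome set.} Take goods $A$ and $B$ with true valuations $v_1=(3,1)$ and $v_2=(1,3)$. The first step is to describe the Pareto-optimal allocations. Since agent~1 has the higher value ratio for $A$, a standard exchange argument shows that at most one good is split, and that agent~1 never holds any of $B$ while agent~2 holds part of $A$. Otherwise, trading a little $A$ from agent~2 for a little $B$ from agent~1, at a suitable exchange rate, strictly helps both agents. Hence every \PO{} allocation has one of two forms: agent~1 gets all of $A$ plus a fraction $t$ of $B$, or agent~2 gets all of $B$ plus a fraction $s$ of $A$. Imposing envy-freeness then cuts this family down by simple linear inequalities. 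For example, in the first form agent~2 needs $3(1-t)\ge 1+3t$, which gives $t\le 1/3$. So the truthful outcome is ``$A$ plus $tB$ to agent~1, with $t\in[0,1/3]$'' or the symmetric ``$B$ plus $sA$ to agent~2, with $s\in[0,1/3]$''.

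\textbf{Agent~1's deviation.} Suppose agent~1 reports $v_1'=(1,2)$ while agent~2 stays truthful. The reported ratios still differ, so the same \PO{} structure holds with respect to the reported profile. In the form where agent~2 gets $B$ plus $sA$, envy-freeness for agent~1 would require $1-s\ge 2+s$, which is impossible. In the other form, agent~1 must not envy, so $1+2t\ge 2(1-t)$, giving $t\ge 1/4$; and agent~2 must not envy, so $t\le 1/3$. Thus the mechanism must give agent~1 all of $A$ plus at least $1/4$ of $B$, worth at least $3.25$ to her under her true valuation. Truthfulness therefore forces the outcome at the true profile to give agent~1 true value at least $3.25$. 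Among the truthful envy-free \PO{} outcomes, that means agent~1 gets $A$ plus $tB$ with $t\ge 1/4$; every outcome of the other form gives her at most $3$.

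\textbf{Agent~2's deviation and the contradiction.} By the symmetric deviation, where agent~2 reports $(2,1)$, truthfulness also forces agent~2 to receive $B$ plus at least $1/4$ of $A$ at the true profile. These two requirements cannot hold simultaneously: the first gives agent~2 none of $A$, and the second gives agent~1 none of $B$ while cutting into her $A$. This contradicts the existence of the mechanism.

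The main obstacle is the careful proof of the Pareto-optimality characterization: at most one good is shared, and it is shared in the direction dictated by the value ratios. This must be established for both the true and the misreported profiles. After that, the remaining checks are linear inequalities.
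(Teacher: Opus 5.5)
Your proof is correct, but it closes the argument differently from the paper. The paper uses two agents who both prefer good~1, with values $(\alpha,1-\alpha)$ and $(\beta,1-\beta)$ for $\alpha>\beta>1/2$. There the envy-free and Pareto-optimal allocations form a single interval ($x_{1,2}=0$, $1/(2\alpha)\le x_{1,1}\le 1/(2\beta)$), along which agent~1 wants more and agent~2 wants less. The paper then case-splits on the mechanism's output: if the output is the left endpoint, agent~1 shades $\alpha$ down; otherwise agent~2 raises $\beta$ to a value $\beta'$ that depends on that output. You instead take a symmetric instance with opposite preferences and two fixed misreports that do not depend on the output. Each misreport guarantees the deviator a true value of at least $13/4$, and truthfulness turns these into two lower bounds on the truthful outcome that cannot hold simultaneously.

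Your route needs no case analysis on the output and gives a uniform gain: at the true profile some agent gets at most $3$ yet can secure $13/4$. By contrast, the deviation in the paper's second case only guarantees a strict gain, which becomes arbitrarily small as $x_{1,1}\to 1/(2\alpha)$. The paper's route, in turn, covers every pair $\alpha>\beta$ in $(1/2,1)$, including agents who rank the goods identically. It also makes the incentive conflict transparent as a tug-of-war over one interval.

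One wording point: for the Pareto characterization to carry over to a misreported profile, it is not enough that the ratios ``still differ.'' You need agent~1's reported ratio $v_1(A)/v_1(B)$ to stay strictly above agent~2's. This does hold for both of your misreports ($1/2>1/3$ and $3>2$).
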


Our proof uses the following fair division instance with two agents and two goods. The valuations of the first and second agent are given in the left and right table, respectively.


\begin{table}[h]
\centering
\begin{tabular}{|c|c|}
\hline
 Good & Value \\
\hline
 1 & $\alpha$ \\
\hline 
 2 &  $1-\alpha$ \\
\hline
\end{tabular}
\hspace{20pt}
\begin{tabular}{|c|c|}
\hline
 Good & Value \\
\hline
 1 & $\beta$  \\
\hline 
 2 &  $1-\beta$\\
\hline
\end{tabular}
\caption{The valuations in the proof of Theorem~\ref{thm:impossibility}.}
\label{table:size-val}
\end{table}

\noindent The instance is parameterized by two variables $\alpha, \beta \in (1/2,1)$ and $\alpha > \beta$. Note that the first good is always more valuable than the second good for both agents. 

\begin{claim}
\label{lemma:counterexample_PO}
    For every $\PO$ allocation $x$, either $x_{1,1}=1$ or $x_{1,2}=0$.
\end{claim}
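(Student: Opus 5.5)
We argue by contraposition: take an allocation $x$ with $x_{1,1}<1$ and $x_{1,2}>0$, and exhibit a feasible allocation that Pareto-dominates it. The instance has no constraints, so every reallocation of the goods is feasible. In addition, any $\PO$ allocation gives nothing to the charity: an unassigned fraction of a good could be handed to either agent, and this strictly helps her because all values are positive.

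There are two cases. First, suppose the charity holds some of good $1$ or good $2$. Then assigning the leftover to agent $1$ strictly increases her value, so $x$ is not $\PO$.

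Otherwise, $x_{2,1} = 1 - x_{1,1} > 0$ and $x_{1,2} > 0$. Here we perform a small exchange. Agent $1$ gives $\delta$ of good $2$ to agent $2$, and agent $2$ gives $t\delta$ of good $1$ to agent $1$. The change in agent $1$'s value is $\delta\left(t\alpha - (1-\alpha)\right)$, and the change in agent $2$'s value is $\delta\left((1-\beta) - t\beta\right)$. We need both changes to be nonnegative with at least one strict. This holds whenever
\[
\frac{1-\alpha}{\alpha} \le t \le \frac{1-\beta}{\beta},
\]
with at least one inequality strict. Since $\alpha>\beta$ and $u \mapsto (1-u)/u$ is strictly decreasing, we have $(1-\alpha)/\alpha < (1-\beta)/\beta$. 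We choose $t$ strictly between these two bounds, so both agents strictly gain.

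Finally, we take $\delta>0$ small enough that $\delta \le x_{1,2}$ and $t\delta \le x_{2,1}$, which keeps the new allocation valid. This Pareto-dominates $x$ and contradicts $\PO$.

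The only real point in the argument is the marginal-rate comparison between the two agents, which follows directly from $\alpha>\beta$. The other thing to check is the observation about the charity, which is needed so that $x_{1,1}<1$ actually implies $x_{2,1}>0$.
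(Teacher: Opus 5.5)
Your proof is correct and follows the paper's argument. The paper likewise moves $\varepsilon$ of good $1$ from agent $2$ to agent $1$ and $\varepsilon'$ of good $2$ back, with $\beta/(1-\beta) < \varepsilon'/\varepsilon < \alpha/(1-\alpha)$, which is exactly your condition under $\varepsilon'/\varepsilon = 1/t$. Your explicit charity case fills in a detail the paper's proof leaves implicit: the paper needs $x_{2,1}>0$ but covers this only through its earlier remark that Pareto-optimal allocations in the unconstrained setting leave nothing to the charity.
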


\begin{proof}
    Towards a contradiction, assume none of the two conditions stated are true, i.e., $x_{1,1} < 1$ and $x_{1,2}>0$. We define allocation $x'$ as follows. For sufficiently small $\varepsilon$ and $\varepsilon'$ such that $\beta/(1-\beta) < \varepsilon'/\varepsilon < \alpha/(1-\alpha)$,  set $x'_{1,1} = x_{1,1} + \varepsilon$, $x'_{1,2} = x_{1,2} - \varepsilon'$, $x'_{2,1} = x_{2,1} - \varepsilon$ and $x'_{2,2} = x_{2,2} + \varepsilon'$. It can be verified that the $x'$ Pareto-dominates $x$, hence establishing a contradiction to the Pareto-optimality of $x$.
\end{proof}

\begin{claim}
\label{lemma:counterexample_EF}
    For every envy-free and Pareto-optimal allocation $x$, we have $x_{1,2} = 0$
and $\frac{1}{2\alpha} \leq x_{1,1} \leq \frac{1}{2\beta}$.
\end{claim}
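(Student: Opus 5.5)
Our plan is to combine Claim~\ref{lemma:counterexample_PO} with the envy-freeness inequalities and rule out one branch of the dichotomy. Since there are no constraints here, Pareto-optimality forces all goods to be allocated; otherwise giving the unallocated part to either agent is a Pareto improvement. So $x_{2,g} = 1 - x_{1,g}$ for $g \in \{1,2\}$. Envy-freeness of agent~1 then reads $\alpha x_{1,1} + (1-\alpha)x_{1,2} \ge \tfrac12$, because her total value is $1$ and she needs at least half of it. Likewise, envy-freeness of agent~2 reads $\beta(1-x_{1,1}) + (1-\beta)(1-x_{1,2}) \ge \tfrac12$, which is equivalent to $\beta x_{1,1} + (1-\beta) x_{1,2} \le \tfrac12$.

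Next we use Claim~\ref{lemma:counterexample_PO}: either $x_{1,2}=0$ or $x_{1,1}=1$. Suppose $x_{1,1}=1$. Agent~2's inequality becomes $\beta + (1-\beta)x_{1,2} \le \tfrac12$. This is impossible, since $\beta > \tfrac12$ and $(1-\beta)x_{1,2} \ge 0$. Hence $x_{1,1} < 1$, and the claim forces $x_{1,2} = 0$.

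With $x_{1,2}=0$, the two envy-freeness inequalities reduce to $\alpha x_{1,1} \ge \tfrac12$ and $\beta x_{1,1} \le \tfrac12$. These give exactly $\tfrac{1}{2\alpha} \le x_{1,1} \le \tfrac{1}{2\beta}$. Note that $\tfrac{1}{2\beta} < 1$ because $\beta > \tfrac12$, which is consistent with $x_{1,1}<1$.

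The only mildly delicate step is the first one: justifying that Pareto-optimality in the unconstrained setting means nothing goes to charity, so that agent~2's bundle is exactly the complement of agent~1's. Once that is settled, the remaining steps are direct substitutions.
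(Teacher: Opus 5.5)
Your proof is correct and follows the paper's route: exclude $x_{1,1}=1$ because agent~2 would then envy agent~1, invoke Claim~\ref{lemma:counterexample_PO} to get $x_{1,2}=0$, and read off both bounds from the two envy-freeness inequalities. You state explicitly that Pareto-optimality forces $x_{2,g}=1-x_{1,g}$, which the paper uses only implicitly for agent~1's bound, and your form $\alpha x_{1,1}\ge \tfrac12$ of that bound avoids a typo in the paper's displayed inequality (its right-hand side should read $\alpha(1-x_{1,1})+1-\alpha$).
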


\begin{proof}
    We note that for any envy-free allocation, it holds $x_{1,1} < 1$, otherwise agent $2$ would envy agent $1$. Hence, from Claim \ref{lemma:counterexample_PO}, we have $x_{1,2} = 0$. Next, we write the condition for envy-freeness for both the agents. For agent $1$, we have $\alpha \cdot x_{1,1}\, \ge (1 - \alpha) \cdot x_{1,1}\, + 1 - \alpha$, which implies $x_{1,1} \ge \frac{1}{2\alpha}$. For agent $2$, we have $\beta \cdot x_{1,1}\,  
  \le  \beta \cdot x_{2,1}+ (1 - \beta) \cdot x_{2,2} \leq (1 - x_{1,1})\cdot \beta + 1 - \beta$, which implies $x_{1,1} \le \frac{1}{2\beta}$.
\end{proof}

 We are now ready to complete the proof of Theorem \ref{thm:impossibility}.
\medskip

{\sc Proof of Theorem~\ref{thm:impossibility}.}
    Towards a contradiction, assume that some mechanism $\mathcal{M}$ satisfies all the three properties. Consider the fair division instance defined above and let the  output of  $\mathcal{M}$ for the instance be $x$. We show that if $\mathcal{M}$ returns an envy-free and Pareto-optimal allocation, then it is not truthful.

From Claim \ref{lemma:counterexample_EF}, we know that $\mathcal{M}$ returns $x$ such that $1/(2\alpha) \leq x_{1,1} \leq 1/(2\beta)$. 
Note that agent $1$ (resp. $2$) prefers higher (resp. lower) fraction $x_{1,1}$.
If $x_{1,1} = 1/(2\alpha)$, then agent $1$ has an incentive to report her value to be $\alpha' = (\alpha+\beta)/2$. 
In this case, $\alpha' < \alpha$ and $1/(2\alpha') > 1/(2\alpha)$, which renders the output $x_{1,1} = 1/(2\alpha)$ invalid due to Claim \ref{lemma:counterexample_EF}. Hence, here $\mathcal{M}$ has to output a higher value of $x_{1,1}$.

Similarly, if $x_{1,1} > 1/(2\alpha)$, then agent $2$ has an incentive to misreport $\beta$. Specifically, if agent $2$ misreports her values as $\beta'$ such that $x_{1,1} >  1/(2\beta')$ (while still maintaining $\beta'< \alpha$), then $\mathcal{M}$ has to output a lower value of $x_{1,1}$ to ensure $x_{1,1} \leq 1/(2\beta')$.

To conclude, no matter the value of $x_{1,1}$ in the output of $\mathcal{M}$, at least one of the agents has an incentive to misreport her valuation. It follows that no mechanism which always outputs envy-free and Pareto-optimal allocations can be truthful. \qed

\subsection{Truthfulness}
\label{section:truthful-in-pairs}
In this section, we complement the impossibility result by showing compatibility of truthfulness individually with Pareto-optimality and feasible envy-freeness.
\medskip

\noindent
\textbf{Truthfulness and Envy-Freeness.}
\label{section:truthful-and-fef}
We devise a truthful mechanism $\mathcal{M}$ which, given a fair division instance under generalized assignment constraints with two agents, always outputs an \FEF{} allocation. $\mathcal{M}$ creates two disjoint subsets of goods, one for each agent, $S_1$ and $S_2$. Each good $g$ is split into identical halves, say $g^{(1)}$ and $g^{(2)}$ and assigned to $S_1$ and $S_2$ respectively.  Then for each agent $i \in \{1,2\}$, $\mathcal{M}$ sets $x_i$ to be the maximum-valued feasible subset of $S_i$ for agent $i$.

\begin{algorithm}[h]
\caption{Truthfulness +\PO{}}

\begin{flushleft}
\hangindent=2em    
\hangafter=1 
\textbf{Input:} Fair division instance $\langle [n], [m], \{v_i\}_{i}, \{\mathcal{F}_i\}_i \rangle$ 

\textbf{Output:} A $\PO$ allocation. 
\end{flushleft}
\label{algorithm:truthPO}
\begin{algorithmic}[1]
\STATE $\mathcal{P} \gets \varnothing$ \Comment{Constraints to be accumulated}
\FOR{$i=1$ to $n$} \STATE Let $q_i$ be the optimal value of stage-$i$ LP with variables $\{x_{j,g}\}_{j\le i,\,g\in[m]}$: \[ \quad\quad\quad\begin{aligned} \text{maximize}\quad & v_i(x_i)\\ \text{subject to}\quad & x_{j,g} \ge 0 \quad \forall j\le i,\ \forall g\in[m]\\ & \sum_{j=1}^{i} x_{j,g} \le 1 \quad \forall g\in[m]\\ & x_j \in \mathcal{F}_j \quad \forall j\le i\\ & \mathcal{P}  
\Comment{constraints accumulated so far} \end{aligned} \]
\STATE $\mathcal{P} \gets \mathcal{P} \cup \{\, v_i(x_i)=q_i \,\}$ \Comment{Set agent $i$'s value} \ENDFOR
\STATE Solve the following final (feasibility) LP to find values for the variables $\{x_{j,g}\}_{j\le n,\,g\in[m]}$ that satisfy: \[ \begin{aligned} 
& x_{j,g} \ge 0 \quad \forall j\in[n],\ \forall g\in[m]\\ & \sum_{j=1}^{n} x_{j,g} \le 1 \quad \forall g\in[m]\\ & x_j \in \mathcal{F}_j \quad \forall j\in[n]\\ & \mathcal{P} \end{aligned}\] \STATE \textbf{Return} $\mathcal{A}=(x_1,\dots,x_n)$ \end{algorithmic} \end{algorithm}

\begin{theorem}
\label{theorem:fef-truthful-two-agents}
    For a fair division instance with $n=2$ agents and generalized assignment constraints, $\mathcal{M}$ is truthful and always outputs an \FEF{} allocation.  
\end{theorem}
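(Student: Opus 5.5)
Truthfulness is immediate from the structure of $\mathcal{M}$, so the substance of the proof is feasible envy-freeness. The sets $S_1$ and $S_2$, each consisting of one half of every good, are fixed independently of the reports. Agent $i$'s bundle is a maximum-valued feasible subset of $S_i$ under her reported valuation. Since the constraints are public, the set of bundles agent $i$ can end up with (feasible subsets of $S_i$) does not depend on anyone's report. Agent $i$'s report only selects which element of this set she receives, and truthful reporting selects one maximizing her true value. The other agent's report affects only the other half, so it cannot influence agent $i$'s bundle. Hence misreporting never helps, and $\mathcal{M}$ is truthful. If ties in the maximization are broken arbitrarily this still holds, since every maximizer gives the same true value. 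Feasibility of the output is also clear, because $x_i \le S_i$ componentwise, $S_1 + S_2 \le \mathbf{1}$, and $x_i \in \mathcal{F}_i$.

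For $\FEF$, I need to show that agent $i$ envies neither agent $j$ nor the charity. The key observation is that the other agent's bundle and the charity are both contained in the complementary half: $x_j \le S_j$, and the charity is $x_{\charity} = (S_i - x_i) + (S_j - x_j)$. Take any $y \le x_j$ with $y \in \mathcal{F}_i$. Since $S_i$ and $S_j$ are identical (both are the half-vector $\tfrac12\mathbf{1}$), $y \le S_j = S_i$, so $y$ is a feasible subset of $S_i$ for agent $i$. By optimality of $x_i$ we get $v_i(y) \le v_i(x_i)$, so agent $i$ does not envy agent $j$.

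The charity is the main obstacle, because a feasible subset $y \le x_{\charity}$ can mix unused parts of $S_i$ with unused parts of $S_j$, and so need not lie in $S_i$ alone. I would use the generalized assignment structure to handle this. Agent $i$'s problem on $S_i$ is a fractional knapsack: maximize $\sum_g v_i(g) z_g$ subject to $0 \le z_g \le \tfrac12$ and $\sum_g s_i(g) z_g \le B_i$. An optimal solution is the greedy one by density $\dens_i$. So $x_i$ fills goods in decreasing order of density, and either it takes all of $S_i$, or it uses the full budget $B_i$ with every good outside its support having density at most that of the goods it takes. In the first case, $x_{\charity} \le S_j - x_j \le \tfrac12 \mathbf{1}$, so $y \le S_i$ and optimality applies as before. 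In the second case, compare $y$ with $x_i$ via the exchange argument used in Lemma~\ref{lemma:weight}. Any $y$ with total size at most $B_i$ has value at most $B_i$ times the density threshold, restricted to the available goods. Since $y_g \le x_{\charity,g} \le 1 - x_{i,g}$, I would show that $y$ places mass only on goods whose density is no larger than the greedy cutoff, apart from mass also available to $x_i$. Concretely, I would bound $v_i(y) \le \sum_g \dens_i(g) s_i(g) y_g$, split goods by density at the cutoff, and use that $x_i$ saturates $\tfrac12$ on all higher-density goods.

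The delicate point is whether $y$ may take up to a full unit of a high-density good that $x_i$ capped at $\tfrac12$. I would check this step carefully. Suppose good $g$ has density above the cutoff, so $x_{i,g} = \tfrac12$. Then $x_{\charity,g} = \tfrac12 - x_{j,g} \le \tfrac12$, so $y$ can take at most as much of $g$ as $x_i$ already holds. With this in hand, the greedy comparison gives $v_i(y) \le v_i(x_i)$, completing the proof.
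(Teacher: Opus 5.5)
Your proposal is correct and follows essentially the same route as the paper. Truthfulness holds because agent $i$'s set of attainable bundles (feasible subsets of $S_i$) does not depend on the reports, there is no envy towards the other agent because $x_j \le S_j = S_i$, and there is no envy towards the charity by the density-greedy structure of $x_i$. Your explicit cutoff bound (every good of density above the cutoff has $x_{i,g} = \tfrac12$, hence charity mass at most $\tfrac12$) is a more precise version of the paper's claim that the best feasible subset of $(S_1\setminus x_1)\cup S_2$ may be taken inside $S_2$.

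Like the paper, you tacitly take $x_i$ to be the greedy maximizer. This is harmless, since by LP complementary slackness every maximizer saturates the goods of density strictly above the cutoff.
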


\begin{proof}

First, we show that $\mathcal{M}$ is truthful. The mechanism computes $x_i \in \arg \max \{v_i(x): x \in S_i  \text{ and $x$ is feasible for $i$} \}$, where the feasible set is independent of reported valuations. Thus, reporting truthfully maximizes the utility of each agent.

To complete the proof, we establish the \FEF{} property of the allocation by showing that agent $1$ does not envy agent $2$ and the charity. By symmetry, the argument also implies that agent $2$ does not envy anyone either. 

The envy-freeness from agent $1$ to agent $2$ follows from the fact that $x_2 \subseteq S_2 = S_1$ and the definition of $x_1$. Towards a contradiction, assume that agent $1$ envies some feasible subset $x'_2 \subseteq x_2 \subseteq S_2$. Then, $v_1(x'_2) > v_1(x_1)$ contradicting the definition of $x_1$.

 We show that agent $1$ does not envy the set $(S_1 \backslash x_1) \cup S_2$, which is a superset of the set $C$ consisting of goods allocated to charity, i.e., $C \subseteq (S_1 \backslash x_1) \cup S_2$. Observe that $x_1$ is constructed by taking the most dense goods from $S_1$ until $s_1(x_1) = B_1$ or $x_1 = S_1$.  If $x_1 = S_1$, then envy-freeness towards charity follows since $C \subseteq S_2$.
 Otherwise, it holds that all goods in $S_1 \backslash x_1$ are at most as dense as the goods in $x_1$. Hence, the maximum-valued feasible subset of goods from  $(S_1 \backslash x_1) \cup S_2$ can be assumed to be picked just from $S_2$ which ensures that the value is at most $v_1(x_1)$ by definition.
\end{proof}

\medskip
\noindent
\textbf{Truthfulness and Pareto-Optimality}
\label{section:truthful-PO}
Next, we show that truthfulness is compatible with Pareto-optimality. 



\begin{theorem}
    Algorithm \ref{algorithm:truthPO} is a truthful mechanism which always outputs a \PO{} allocation.
\end{theorem}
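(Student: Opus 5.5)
The argument splits into two parts: Pareto-optimality, which follows from the lexicographic structure of the stage LPs, and truthfulness, which follows from the fact that agent $i$'s report only enters the stage-$i$ objective.

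\emph{Well-definedness.} First I will check that every stage-$i$ LP and the final feasibility LP is feasible and bounded. Feasibility holds by induction. At stage $i$, take the optimal solution of stage $i-1$ and set $x_{i,g}=0$ for every $g$. This satisfies all accumulated constraints in $\mathcal{P}$, keeps $\sum_{j\le i} x_{j,g}\le 1$, and $0\in\mathcal{F}_i$ under generalized assignment constraints. Boundedness holds because all variables lie in $[0,1]$. The final LP is feasible because the stage-$n$ optimum satisfies all of $\mathcal{P}$, including $v_n(x_n)=q_n$. All constraints are linear, since $\mathcal{F}_j$ is a budget constraint and each $v_j$ is additive, so everything is a genuine LP.

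\emph{Pareto-optimality.} Let $x$ be the output and suppose a feasible $z$ Pareto-dominates it. Let $k$ be the smallest index with $v_k(z_k)>v_k(x_k)=q_k$. For $j<k$ we have $v_j(z_j)\ge q_j$. I must rule out $v_j(z_j)>q_j$ for such $j$. I will prove by induction on $j$ that every feasible allocation satisfying $v_\ell(\cdot)=q_\ell$ for all $\ell<j$ has $v_j\le q_j$. Restricting such an allocation to agents $1,\dots,j$ gives a feasible point of the stage-$j$ LP, so its value is at most $q_j$. Applying this successively to $z$ for $j=1,\dots,k-1$ forces $v_j(z_j)=q_j$ for all $j<k$. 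Then $z$ restricted to agents $1,\dots,k$ is feasible for the stage-$k$ LP with value $>q_k$, contradicting the optimality of $q_k$. The one point to check is that restricting to the first $j$ agents preserves feasibility; it does, because dropping agents only loosens the supply constraints.

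\emph{Truthfulness.} This is the main obstacle, and I expect it to hold only in the sense of the fixed-priority serial-dictatorship argument. The claim is that agent $i$ receives exactly $q_i$, the maximum value she can obtain subject to agents $1,\dots,i-1$ getting their values $q_1,\dots,q_{i-1}$.
\begin{itemize}
\item The values $q_1,\dots,q_{i-1}$ depend only on the reports $v_1,\dots,v_{i-1}$ and the public constraints, so they are independent of $i$'s report.
\item If $i$ reports $v_i'$, the stage-$i$ LP maximizes $v_i'$ over the same feasible region $R_i$ (the constraints of $\mathcal{P}$ from stages before $i$ plus supply and budget).
\item Whatever happens at later stages, the final allocation $x'$ restricted to agents $1,\dots,i$ lies in $R_i$. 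Hence the true value $v_i(x_i')\le\max_{R_i} v_i = q_i$.
\item Under truthful reporting, the constraint $v_i(x_i)=q_i$ is added to $\mathcal{P}$ and enforced in the final LP, so agent $i$ gets exactly $q_i$.
\end{itemize}
Misreporting therefore never helps. The subtle point is that the constraint $v_i(x_i)=q_i$ is stated in terms of the reported valuation. This is harmless: truthful agents are the ones whose utility we compare, and the bound $v_i(x_i')\le q_i$ for a misreport uses only membership in $R_i$. Combining the two parts gives the theorem.
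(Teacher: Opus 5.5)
Your proof is correct and follows the same route as the paper. For Pareto-optimality, both arguments take the least-indexed agent with a strict improvement and derive a contradiction with the stage-$k$ LP; for truthfulness, both rest on the observation that agent $i$'s report enters only the stage-$i$ objective.

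Your induction in the Pareto-optimality part is unnecessary. Minimality of $k$ together with dominance already forces $v_j(z_j)=q_j$ for all $j<k$.

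Your truthfulness argument usefully makes explicit a step the paper leaves implicit. Under any misreport, the final allocation restricted to agents $1,\dots,i$ still lies in the stage-$i$ region, so agent $i$'s true value is at most $q_i$.
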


\begin{proof}
    First, we establish that Algorithm \ref{algorithm:truthPO} always returns a \PO{} allocation.
    Towards a contradiction, assume that, there exists an instance for which Algorithm \ref{algorithm:truthPO} outputs allocation $x$ and there exists an allocation $x'$ such that for all agents $j \in [n]$, $v_j(x') \ge v_j(x)$ with at least one inequality being strict. Let $h$ be the least-indexed agent such that the inequality is strict. Then, we have that for all $j \in [h-1]$, $v_j(x'_j) = q_j$ and $v_h(x'_h) > q_h$. However this directly contradicts the maximality of $q_h$ for the stage-$h$ linear program. Specifically, setting $x_j = x'_j$ for all $j \in [h]$ achieves a higher objective value as $v_h(x'_h) > q_h$.

    Next, we show that the mechanism is truthful. Observe that the final value obtained by agent $j$ is equal to $q_j$, which is the optimal value of the stage-$j$ linear program. As the reported valuation vector appears only in the objective function and not in the constraints, it is straightforward to see that maximizing the objective $v_j(x_j)$ leads to maximum value for agent $j$.
\end{proof}

\section{Conclusion}
This work advances our understanding of fair, efficient, and truthful allocations of divisible goods under generalized assignment constraints. We resolve an open problem posed in \cite{barmanfinding} by establishing that an $\FEF$ and $\PO$ allocation always exists under this setting. We also prove that truthfulness, fairness, and efficiency are incompatible, while these three properties when considered in pairs admit positive results. Extending Theorem \ref{theorem:fef-truthful-two-agents} by developing a truthful mechanism that finds fair (but not necessarily Pareto-optimal) allocations among three or more agents, or showing the nonexistence of such a mechanism, is an interesting direction for future work. Also, exploring whether truthfulness is compatible with approximations of Pareto-optimality and $\FEF$ is another direction that deserves investigation.

\section*{Acknowledgements}
Ioannis Caragiannis and Sudarshan Shyam were partially supported by the Independent Research Fund Denmark (DFF) under grant 2032-00185B. Siddharth Barman acknowledges the support of the Walmart Center for Tech Excellence (CSR WMGT-23-0001) and an Ittiam CSR Grant (OD/OTHR-24-0032).

\bibliographystyle{ACM-Reference-Format}
\bibliography{sample}

\end{document}